\newcommand{\Can}{\mathsf{C}}
\newcommand{\changeval}{\mathsf{changeval}}
\newcommand{\cut}{\mathsf{cut}}
\newcommand{\deletemin}{\mathsf{deletemin}}
\newcommand{\down}{\mathsf{down}}
\newcommand{\expose}{\mathsf{expose}}
\newcommand{\It}{\mathrm{It}}
\newcommand{\join}{\mathsf{join}}
\newcommand{\lef}{\mathsf{le}}
\newcommand{\link}{\mathsf{link}}
\renewcommand{\max}{\mathsf{max}}
\renewcommand{\min}{\mathsf{min}}
\newcommand{\pd}{\mathsf{Pd}}
\newcommand{\N}{\mathbb N}
\newcommand{\nil}{\mathsf{null}}
\newcommand{\nxt}{\mathsf{next}}
\newcommand{\Path}{\mathsf{Path}}
\newcommand{\psort}{\mathsf{psort}}
\newcommand{\rotateleft}{\mathsf{rotateleft}}
\newcommand{\rig}{\mathsf{ri}}
\newcommand{\sub}{\mathsf{sub}}
\renewcommand{\sup}{\mathsf{sup}}
\newcommand{\Team}{\mathsf{Team}}
\newcommand{\Time}{\mathsf{T}_\mathsf{exp}}
\newcommand{\upp}{\mathsf{up}}
\newcommand{\val}{\mathsf{val}}
\definecolor{lightgray}{gray}{0.65}
\begin{document}
\title{Dynamic Partial Sorting}

\author{ Jiamou Liu \and Kostya Ross}
\institute{School of Computer and Mathematical Sciences \\ Auckland University of Technology, New Zealand\\
\email{jiamou.liu@aut.ac.nz},\ \  \email{hsv5433@aut.ac.nz}
}

\maketitle

\begin{abstract}
The dynamic partial sorting problem asks for an algorithm that maintains  lists of numbers under the link, cut and change value operations, and queries the sorted sequence of the $k$ least numbers in one of the lists. We first solve the problem  in $O(k\log (n))$  time for queries and $O(\log (n))$ time for updates using the tournament tree data structure, where $n$ is the number of elements in the lists.  We then introduce a layered tournament tree data structure and solve the same problem in $O(\log_\varphi^* (n)  k\log (k))$ time for queries and $O\left(\log (n)\cdot\log^2(\log (n))\right)$  for updates, where $\varphi$ is the golden ratio and $\log_\varphi^*(n)$ is the iterated logarithmic function with base $\varphi$.
\end{abstract}

\section{Introduction} \label{sec:intro}

{\bf The problem setup.} Many practical applications store data in a collection of key-value pairs where the keys are drawn from an ordered domain. In such applications, queries would be made on the order statistics of values within a subset of keys. Consider as an example the loan data of a library. This can be represented by an ordered map whose keys are structured indices indicating the category of the book, and whose values are the number of times the book was borrowed since acquisition. One possible type of queries involves retrieving the most popular (or least popular) books from various categories or subcategories. Facilitating such queries is an inherently {\em dynamic} problem; firstly, the subsets of the ordered map for whose values order statistics are desired can vary, and secondly, ordered maps typically represent
mutable data, which requires values and keys to change.

\smallskip

Existing algorithms and data structures cannot effectively solve this problem. If we
want to represent an mutable ordered map, the standard solution (a
self-balancing binary search tree) cannot efficiently extract order
statistics about its values. On the other hand, existing selection algorithms for data with structured keys rely on the
data being static, which in a dynamic context, would force a re-run of
the algorithm on every change. Neither of these are desirable,
especially when the data being represented by the ordered map is
large. This leads to a need for a solution that can effectively extract order statistics
about values, while being amenable to data mutation.

\smallskip

Abstractly, we may view an ordered map as a list of numbers, where elements are arranged in the list by their keys. The above queries then amount to obtaining order statistics of numbers within intervals of the list. Formally, we propose the {\em dynamic partial sorting problem}, which is stated as follows: \ Maintain a collection of lists $\ell_1,\ell_2,\cdots,\ell_m$ of numbers, while allowing the following {\em partial sorting
operation}:

\begin{itemize}
\item $\psort(\ell_i,k)$: Return the $k$ smallest numbers in $\ell_i$
  if $k$ is at most the size of $\ell_i$, and all elements in $\ell_i$
  otherwise. The output should be in increasing order.
\end{itemize}
We also support the following {\em update} operations:

\begin{itemize}
\item $\changeval(\ell_i,x,x')$: Suppose $x$ is a number in $\ell_i$;
  change $x$ to $x'$.
\item $\link(\ell_i,\ell_j)$: Link the lists $\ell_i$ and $\ell_j$ by
  attaching the tail of $\ell_i$ to the head of $\ell_j$.
\item $\cut(\ell_i,x)$: Suppose $x$ is a number in $\ell_i$; separate
  $\ell_i$ into two lists, such that the first list contains all
  elements from the head of $\ell_i$ to $x$ inclusive, and the second
  list contains all other elements of $\ell_i$.
\end{itemize}
We assume the parameter $x$ in the $\cut(\ell_i,x)$ and $\changeval(\ell_i,x,x')$ operations points directly to the element $x$ in $\ell_i$, and therefore no searching is necessary. In this paper, we are only going to focus on the $\link$, $\cut$, $\changeval$ and $\psort$ operations as defined above. We observe that these operations also permit partial sorting on arbitrary intervals in a list.

\smallskip

Dynamically maintaining a sorted list of numbers is a well-explored topic. Existing solutions include utilizing various self-balancing binary search trees \cite{Sahni05}. These data structures are not suitable for the dynamic partial sorting problem, as here we require elements in the lists to preserve their orders while extracting order statistics from the lists. To the authors' knowledge, there has not been work formally addressing the dynamic partial sorting problem. Here we describe some naive algorithms for solving the problem:

\smallskip

The first naive solution to the dynamic partial sorting problem is to simply put numbers in a linked list. Thus $\link(\ell,\ell')$, $\cut(\ell,x)$ and $\changeval(\ell,x,x')$ are solved in constant time, but to perform $\psort(\ell,k)$, we run an optimal static algorithm such as quick select \cite{Hoare62} and partial quicksort \cite{Martinez04}, which take time $O(n+k\log (k))$.

\smallskip

The second naive solution to the dynamic partial sorting problem is to store the numbers in each list in a priority queue. This allows us to perform $\psort(\ell,k)$ by repeatedly removing and returning the minimum item, and then re-inserting those items afterwards. The running time  of $\psort(\ell,k)$ is $O(k \log (n))$, where $n$ is the number of elements in $\ell$. We can perform $\link(\ell,\ell')$ and $\cut(\ell,x)$ by successively inserting or deleting elements from the priority queues of the lists. Hence each of these operations takes $O(n\log (n))$.

\smallskip

{\bf Related work.} Bordim et al have employed a partial sorting algorithm to solve problems in common-channel communication over single-hop wireless sensor networks \cite{Bordim02}. Additionally, the problem has been generalized to sorting intervals \cite{Martinez10}. The asymptotic time complexity of partial sorting has been thoroughly studied \cite{Kuba06,Hwang00,Floyd75}.

Several data structures for partial sorting have been described. Navarro and Paredes proposed one such structure in \cite{Navarro10}, but it is optimized for use of memory, rather than time, and is both amortized and online. Duch et al presented another structure in \cite{Duch12} for the selection problem which can be used for partial sorting. However the structure is not dynamic, and depends heavily on the length of the input data. 

\smallskip

{\bf Contribution of the paper.} The goal of the paper is to design a solution to the dynamic partial sorting problem where the query and update operations have better time complexity.  We first describe a solution that is based on the {\em tournament tree} data structure.  A tournament tree of a list of numbers is a full balanced binary tree whose leaves are the elements of the list and the value of every internal node is the minimum of the values of its two children. Hence any node in the tournament tree stores the minimum number in the subtree rooted at this node.
Based on this observation, we perform the $\psort(\ell,k)$ operation in time $O(k\log (n))$. We perform $\changeval(\ell,x,x')$ in $O(\log (n))$ time by updating the path from $x$ to the root.
The link and cut operations are handled in a similar way as linking and cutting balanced binary trees, and thus take time $O(\log (n))$.

\smallskip

The tournament tree solution to the partial sorting problem allows efficient query and update operations.  However, the time complexity of the $\psort(\ell,k)$ operation depends both on $k$ and the size $n$ of the list $\ell$. In practical applications where $n$ could be much larger than $k$, it is desirable to make the running time of the query operation independent from $n$.
Therefore we develop another dynamic algorithm that solves the dynamic partial sorting problem with the following properties:

\begin{itemize}
\item We handle $\psort(\ell,k)$ in such a way that the size $n$ of $\ell$ has minimal influence on the time complexity of the operation.
\item The time complexity of the update operations is not much worse than the tournament-tree-based algorithm above. More precisely, the update operations run in $o(\log^2 (n))$.
\end{itemize}

To this end, we introduce a recursive data type called the {\em layered tournament tree} data structure. The main idea is that, instead of using one tournament tree to store the items in a list, we use multiple {\em layers} of tournament trees. The layers extend downwards.
The top layer consists of the tournament tree of the list. This tournament tree is partitioned into {\em teams} where each team can be viewed as a path segment of the tree. Each of these teams is then represented by  a tournament tree in the  layer below, where elements of the team correspond to leaves in the tree.
 The tournament tree of a team is again partitioned into teams which are represented by tournament trees in the subsequent layer. This process continues until the team consists of only one node.
Since we maintain the tournament trees as balanced trees, we can guarantee that a tree in a particular layer has logarithmic size compared to the corresponding tree in the layer above.

\smallskip

We define the partial sort operations for tournament trees on every layer of the data structure. Using an iterative algorithm that recursively calls the partial sort operation in lower layers, we perform the $\psort(\ell,k)$ operation on the original list $\ell$. The time complexity of the operation is $O\left(\log_\varphi^* (n) k\log (k)\right)$ where $n$ is the number of items in $\ell$, $\varphi=\frac{\sqrt{5}+1}{2}$ is the golden ratio and $\log_\varphi^*(n)$ is the iterated logarithmic function with base $\varphi$ (See Section~\ref{sec:psort} for a definition). Since the function $\log_\varphi^*(n)$ is almost constant even for very large values of $n$, the running time of $\psort(\ell,k)$ is almost independent from $n$.  The time complexity of the $\link(\ell,\ell')$, $\cut(\ell,x)$ and $\changeval(\ell,x,x')$ operations is  $O\left(\log n\cdot\log^2(\log n)\right)$.

\smallskip

{\bf Organization.} Section~\ref{sec:tournament} introduces the tournament tree data structure. Section~\ref{sec:TT} describes the solution to the dynamic partial sorting problem using tournament trees. Section~\ref{sec:ltt} introduces the layered tournament tree data structure. Section~\ref{sec:psort} and Section~\ref{sec:update} discuss the algorithms for the $\psort(\ell,k)$ operation and the update operations using layered tournament trees, respectively. Section~\ref{sec:conc} concludes the paper and discusses future work.

\section{Tournament Trees} \label{sec:tournament}
A {\em list} is an ordered tuple of numbers.
We write a list $\ell$ as $a_1,a_2,a_3,\ldots,a_k$ where $k$ and each element $a_i$ is a natural number. Throughout the paper we assume that the elements in a list are pairwise distinct.

\smallskip

{\bf Trees.}
We assume a pointer-based computation model for our {\em tree data structure}. This means that every node in the tree has a reference that points to its parent.
We normally use $T$ for a tree and $V$ for the set of nodes in $T$.
The {\em size} of a tree $T$ is $|V|$. For every node $v\in V$, we use $p(v)$ to denote the parent of $v$ if $v$ is not the root, and set $p(v)=\nil$ otherwise.

We will use binary trees to represent lists of numbers. The fields of any node $v\in V$ in a binary tree consist of a tuple
\[
    (p(v), \lef(v), \rig(v), \val(v))
\]
where $\lef(v), \rig(v)$ are respectively the left child and right child of $v$. The field  $\val(v)$ is a integer value associated with the node $v$.

We use $T(v)$ to denote the subtree rooted at $v$. A {\em path} is a set of nodes $\{u_0,u_1,\ldots,u_m\}$ where $m\in \N$, $u_0$ is a leaf and $u_{i+1}=p(u_{i})$ for $0\leq i<m$. We call $m$ the {\em length} of the path.
The {\em height} $h(T)$ of a tree $T$ is the maximum length of any path in $T$. 
A binary tree $T$ is {\em balanced} if for every node $v\in V$, $|h(T(\lef(v)) - h(T(\rig(v))|\leq 1$.
A binary tree is {\em full} if every internal node has exactly two children, i.e., the $\lef(v)$ and $\rig(v)$ fields are both non-null.

\smallskip

{\bf Tournament trees.} The tournament tree data structure is inspired by the tournament sort algorithm, which uses the idea of a single-elimination tournament in selecting the next element \cite{TAOCP}. Formally, the data structure is defined as follows:

\begin{definition}
A {\em tournament tree} of a list $\ell$ of numbers $a_1,a_2,a_3,\ldots,a_n$ is a balanced full binary tree $T$ that satisfies the following properties:
\begin{enumerate}
\item The tree has exactly $n$ leaves whose values are $a_1,a_2,\ldots,a_n$ respectively.
\item For every internal node $v\in V$, if $\val(\lef(v))=a_i$ and $\val(\rig(v))=a_j$, then $i<j$ and $\val(v) = \min\{a_i,a_j\}$.
\end{enumerate}
\end{definition}

See Figure~\ref{fig:tournament} for an example of a tournament tree. Intuitively, one can view a tournament tree of a list of numbers as a  binary search tree, where the numbers are stored in the leaves. The key of each leaf in the binary search tree is the index of the number it stores in the list, and the value is the number itself. 

\tikzstyle{level 1}=[sibling distance=8cm]
\tikzstyle{level 2}=[sibling distance=4cm]
\tikzstyle{level 3}=[sibling distance=2cm]
\begin{figure}[H]

\scalebox{.6}{
\begin{tikzpicture}[decoration={markings,mark=at position 3cm with {\arrow[black]{stealth}}},path/.style={*->,>=stealth,postaction=decorate},
					list/.style={rectangle split, rectangle split parts=2,draw,rectangle split horizontal,minimum width=1.5cm, on chain}, start chain]

\node[circle,draw,minimum size=20pt,thin]{\Large 2}
	child[very thick]{node[circle,draw,minimum size=20pt,thin]{\Large 2}
		child[thin]{node[circle,draw,minimum size=20pt]{\Large 3}
			child[very thick]{node[circle,draw,minimum size=20pt,thin](base){\Large 3}}
			child{node[circle,draw,minimum size=20pt]{\Large 6}}}
		child[very thick]{node[circle,draw,minimum size=20pt,thin]{\Large 2}
			child[thin]{node[circle,draw,minimum size=20pt]{\Large 9}}
			child[very thick]{node[circle,draw,minimum size=20pt,thin]{\Large 2}}}}
	child{node[circle,draw,minimum size=20pt]{\Large 4}
		child[very thick]{node[circle,draw,minimum size=20pt,thin]{\Large 4}
			child[very thick]{node[circle,draw,minimum size=20pt,thin]{\Large 4}}
			child[thin]{node[circle,draw,minimum size=20pt,thin]{\Large 7}}}
		child{node[circle,draw,minimum size=20pt]{\Large 8}}};

\node [
	list,
	below=of base,
    yshift=0.5cm
] (3) {\Large 3};

\node[left=of 3](label){\huge $\ell:$};

\node [
	list,
	right=of 3
] (12) {\Large 6};

\node [
	list,
	right=of 12
] (16) {\Large 9};

\node [
	list,
	right=of 16,
	xshift=0.1cm
] (2) {\Large 2};

\node [
	list,
	right=of 2,
	xshift=0.1cm
] (4) {\Large 4};

\node [
	list,
	right=of 4,
	xshift=0.2cm
] (7) {\Large 7};

\node [
	list,
	right=of 7,
	xshift=0.5cm
] (13) {\Large 8};

\node [
	draw,
	on chain,
	inner sep=6.5pt,
	right=of 13,
] (end){};

\draw(end.north east) -- (end.south west);
\draw(end.north west) -- (end.south east);

\draw[path] let \p1 = (3.two), \p2 = (3.center) in (\x1,\y2) -- (12);
\draw[path] let \p1 = (12.two), \p2 = (12.center) in (\x1,\y2) -- (16);
\draw[path] let \p1 = (16.two), \p2 = (16.center) in (\x1,\y2) -- (2);
\draw[path] let \p1 = (2.two), \p2 = (2.center) in (\x1,\y2) -- (4);
\draw[path] let \p1 = (4.two), \p2 = (4.center) in (\x1,\y2) -- (7);
\draw[path] let \p1 = (7.two), \p2 = (7.center) in (\x1,\y2) -- (13);
\draw[path] let \p1 = (13.two), \p2 = (13.center) in (\x1,\y2) -- (end);
\end{tikzpicture}}
\caption{A tournament tree of a list $\ell=3,6,9,2,4,7,8$. Edges in principal paths are bolded.}\label{fig:tournament}
\end{figure}
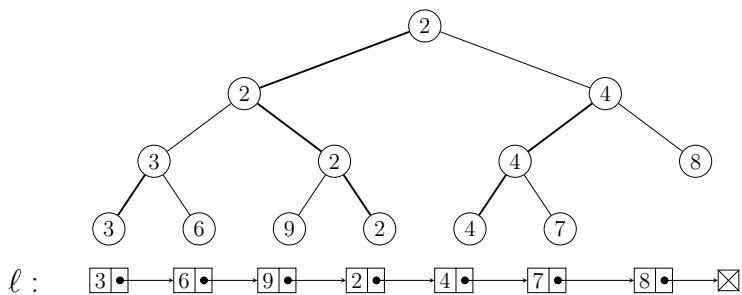

As a tournament tree  is balanced, its height is logarithmic with respect to the number of leaves. More specifically we prove the following lemma.
\begin{lemma}\label{lem:fibonacci}
If $T$ is a tournament tree with $n$ leaves where $n>0$, then the height of $T$ is not more than $\log_\varphi(n)$ where $\varphi$ is the golden ratio.
\end{lemma}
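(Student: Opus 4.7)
The plan is to ignore the value/min-heap structure entirely and focus on the combinatorial constraint that $T$ is a balanced full binary tree. Intuitively, a tree of large height must contain many leaves, and the extremal trees are those whose two subtrees have heights differing by exactly $1$. So I would define $N(h)$ to be the minimum number of leaves in a balanced full binary tree of height $h$ and show $N(h) \geq \varphi^h$; inverting this gives the lemma.

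First, I would check the base cases $N(0)=1$ (a single leaf) and $N(1)=2$ (a root with two leaf children, which is forced because the tree is full). For $h \geq 2$, I would argue that the root has two subtrees $T_L = T(\lef(v))$ and $T_R = T(\rig(v))$ whose heights $h_L, h_R$ satisfy $|h_L - h_R| \leq 1$ and $\max(h_L, h_R) = h - 1$. Minimising the leaf count over the two cases ($\{h_L,h_R\} = \{h-1,h-1\}$ or $\{h-1,h-2\}$) together with the inductive definition of $N$ yields the recurrence
\[
    N(h) \;=\; N(h-1) + N(h-2),
\]
which is the Fibonacci recurrence. Hence $N(h) = F_{h+2}$ in the standard indexing.

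Next, I would prove by induction on $h$ that $F_{h+2} \geq \varphi^h$. The base cases $h=0,1$ give $F_2 = 1 \geq 1$ and $F_3 = 2 \geq \varphi$, both checkable directly. For the inductive step, the identity $\varphi^2 = \varphi + 1$ does the work:
\[
    F_{h+2} \;=\; F_{h+1} + F_h \;\geq\; \varphi^{h-1} + \varphi^{h-2} \;=\; \varphi^{h-2}(\varphi + 1) \;=\; \varphi^{h-2}\cdot \varphi^2 \;=\; \varphi^h.
\]
Combining the two bounds, a tournament tree $T$ with $n$ leaves and height $h$ satisfies $n \geq N(h) \geq \varphi^h$, from which $h \leq \log_\varphi n$ follows.

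The only delicate point is justifying that the extremal (minimum-leaf) balanced full binary tree of height $h$ really is the one with subtree heights $h-1$ and $h-2$, rather than some other configuration; since the trees are full, every non-leaf has two children and the balance condition forces their heights to be in $\{h-1, h-2\}$, so there is nothing else to consider. Everything else is routine induction and an application of the golden ratio identity $\varphi^2 = \varphi + 1$.
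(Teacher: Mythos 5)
Your proof is correct and takes essentially the same approach as the paper: both bound the minimum number of leaves of a balanced full binary tree of height $h$ by the Fibonacci recurrence $f(h) \geq f(h-1) + f(h-2)$ with base cases $1$ and $2$, then use $\varphi^2 = \varphi + 1$ to conclude $f(h) \geq \varphi^h$. You spell out the Fibonacci identification and the inductive step a bit more explicitly than the paper does, but the underlying argument is the same.
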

\begin{proof}
It suffices to show that the least number of leaves $f(h)$ in any tournament tree with height $h\geq 0$ is  $\varphi^{h}$, where $\varphi=\frac{\sqrt{5}+1}{2}$ is the golden ratio. The lemma can be easily proved using the following observation. Note that here we use the fact that a tournament tree is balanced and full.
\[f(h)\geq
\begin{cases}
     1 & \text{ if $h=0$,} \\
     2 & \text{ if $h=1$,}\\
     f(h-1)+f(h-2) & \text{ if $h\geq 2$.}
\end{cases}
\]\qed
\end{proof}

\section{Dynamic Partial Sorting With Tournament Trees} \label{sec:TT}

We now describe an algorithm for solving the dynamic partial sorting problem based on tournament trees.  The algorithm assumes that any list $\ell$ of numbers is represented as a tournament tree $T$, whose leaves are the elements of $\ell$. Therefore, we will refer to a list and its tournament tree interchangeably. Furthermore, when we refer to an element $x$ of $\ell$, we also mean the leaf $u$ in $T$ with value $x$ and vice versa. All terms that relate to a tournament tree $T$ carry forward to the corresponding list $\ell$. Hence the {\em nodes}, {\em root}, {\em leaves}, and {\em internal nodes} of $\ell$ refer to the equivalent concepts in $T$.

Let $\ell$ be a list of numbers.
We list the elements of $\ell$ from small to large as $x_1,x_2, \ldots,x_n$.
By definition, the root of $\ell$ has the smallest value. Therefore to find the minimum element $x_1$, we simply return the root. For finding the subsequent $x_i$'s, we make the following definitions.

\begin{definition} Let $T$ be a tournament tree. For any nodes $u,v$ in $T$, we write $u\sim v$ if $\val(u)=\val(v)$.
\end{definition}
As we assume that any list $\ell$ contains pairwise distinct numbers, the equivalence relation $\sim$ partitions the nodes in a tournament tree into disjoint paths.
\begin{definition}
 The {\em principal path} $\Path(u)$ of a node $u$  is the equivalence class $\{v\mid u\sim v\}$. The {\em value} of $\Path(u)$ is $\val(u)$.
\end{definition}
Intuitively we view $\Path(u)$ as a path that {\em originates} from a leaf in $T$ and extends upwards, and every node in $\Path(u)$ ``gains'' its value from this leaf. Hence we single out this leaf and define the following.
\begin{definition}
The {\em origin} of a principal path $P$ is the leaf in $P$.
\end{definition}
Later when referring to ``a principal path'' in the tree $T$, we mean $\Path(u)$ for some node $u$ in $T$.
Note that the second least number in $T$ is the value of a sibling of some node in the principal path of $T$'s root.
In general, for any $1\leq i< n$, let $P_i$ denote the principal path in $T$ with value $x_i$. The number $x_{i+1}$ is $\val(u)$ where $u$ is a sibling of some node in
\[
    P_1\cup P_2\cup \cdots \cup P_{i}.
\]
Hence in computing the $(i+1)$th smallest number in $\ell$ one would need to examine all principal paths whose origins are $x_1,x_2,\ldots,x_i$, and the values of the siblings of nodes on these paths. Formally, we make the following definition.
%
%
\begin{definition} Let $u$ be an internal node in a tournament tree $T$.
The {\em subordinate} $\sub(u)$ of $u$ is a child of $u$ that does not belong to the same principal path as $u$.
\end{definition}

Based on the above observation, to perform $\psort(\ell,k)$, we first output the root of $\ell$ (along with its value), and then apply the following: \ Whenever a node $u$ is returned, we continue to examine the subordinates of all nodes in the principal path of $u$.  This process is continued until we return $\min\{k,n\}$ nodes in $\ell$. During this process we use a priority queue to store the nodes examined so far.
Formally we describe the operation in Algorithm~\ref{alg:tournament-psort}. 

\begin{algorithm}[H]
\caption{$\mathsf{\psort}(\ell,k)$ }\label{alg:tournament-psort}
\begin{algorithmic}[1]
\State $u \gets$ the root of $\ell$
\State Make a new priority queue $Q$
\For{$k$ iterations}
\State Output $\val(u)$
\While{$u \neq \nil$}
\State $y \gets \sub(u)$
\State \Call{$\mathsf{insert}$}{$Q, y$}
\State $u \gets$ the child of $u$ with the same value as $u$, or $\nil$ if no such child exists
\EndWhile
\State $u \gets$ \Call{$\deletemin$}{$Q$}, or $\nil$ if $Q$ is empty
\EndFor
\end{algorithmic}
\end{algorithm}




To perform $\changeval(\ell,x,x')$, we first change the value of the leaf $x$  to $x'$. This can make the values of every ancestor of $x$ incorrect; thus, we walk the path from $x$ to the root of $\ell$, and set the value of every ancestor of $x$ to be the minimum value of its children.  For an exact description, see Algorithm~\ref{alg:tournament-changeval}.

\begin{algorithm}[H]
\caption{$\changeval(\ell,x,x')$}\label{alg:tournament-changeval}
\begin{algorithmic}
\State $\val(x) \gets x'$; $v \gets p(x)$
\While{$v\neq \nil$}
\If{$\val(v) \neq \min\{\val(\lef(v)), \val(\rig(v))\}$}
\State $\val(v) \gets \min\{\val(\lef(v)), \val(\rig(v))\}$
\State $v \gets p(v)$
\EndIf
\EndWhile
\end{algorithmic}
\end{algorithm}

The link and cut operations are handled in a similar way as linking and cutting self-balancing binary search trees as described in \cite{Tarjan85}.

\begin{itemize}
\item {\bf Link.} For the $\link(\ell,\ell')$ operation, we let $T_1$ and $T_2$ denote the tournament trees of $\ell$ and $\ell'$ respectively.
  Without loss of generality, we assume that $h(T_1) > h(T_2)$; the other case is symmetric.
  We would like to join $T_1$ and $T_2$ so that all leaves in $T_1$ are to the left of the leaves in $T_2$ in the resulting tree.
For this operation, we follow right child pointers from the root of $T_1$ until we reach a node $x$ such that $h(T_1(x)) \leq h(T_2)$. We then cut the subtree $T_1(x)$ away from $T_1$, and replace it with a new node $u$; we set $\lef(u)$ to be $x$, $\rig(u)$ to be the root of $T_2$, and $\val(u)$ as the minimum of the values of $u$'s two children. This change can cause the new tree to become unbalanced, and may also require us to modify the values of the nodes on the path from $u$ to the root. To solve these problems, we walk the path from $u$ to the root; at each node $v$ on the path, we must perform two tasks. Firstly, we check whether $T(v)$ is unbalanced; if it is, we perform a left tree rotation on its right child $v'$ and then we set $\val(v')$ to be the minimum of the values of its children. Secondly, we correct $\val(v)$ to be the minimum of the values of its children. We only need to perform a rotation once for any join, as the height of any subtree of $T_1$ has increased by at most 1 as part of this process. Note that the resulting tree is a balanced full binary tree. See Algorithm~\ref{alg:tournament-join}. In this description, we use $\mathsf{rotateleft}(u)$ to refer to a left tree rotation of the node $u$.

\begin{algorithm}
\caption{$\mathsf{\link}(T_1,T_2)$ \ \ (For the $h(T_1)>h(T_2)$ case)}\label{alg:tournament-join}
\begin{algorithmic}[1]
\State $x \gets$ the root of $T_1$, $x' \gets$ the root of $T_2$
\While{$h(T_1(x)) > h(T_2)$}
\State $x \gets \rig(x)$
\EndWhile
\State Create a new node $u$
\State $\rig(p(x)) \gets u$, $\lef(u) \gets x$, $\rig(u) \gets x'$ \Comment{Form a new tree with left subtree $T_1(x)$ and right subtree $T_2$}
\State $\val(u) \gets \min\{\val(u), \val(x')\}$
\State $y \gets u$
\While{$y \neq \nil$}
\State $z \gets \lef(p(u))$
\If{$h(y) > h(z)+1$}
\State \Call{$\mathsf{rotateleft}$}{$y$}
\State $\val(p(z)) \gets \min\{\val(z),\val(\rig(p(z))\}$
\EndIf
\State $\val(y) \gets \min\{\val(\lef(y)), \val(\rig(y))\}$
\State $y \gets p(y)$
\EndWhile
\end{algorithmic}
\end{algorithm}

\item {\bf Cut.} To perform the $\cut(\ell,x)$ operation, we need to split the tournament tree $T$ of $\ell$ at the leaf $u$ where $\val(u)=x$, such that $u$ and all leaves to its left belong to one tournament tree, and all leaves to its right belong to another.  For this operation, we first walk the path from $u$ to the root, deleting every edge on the path and incident to it. We also remove any internal nodes which have no children as part of this process. This breaks the tree into a collection of subtrees, the root of each of which was a child of a node on the path from $u$ to the root. We then link the subtrees containing nodes to the left of $u$ (and $u$ itself)  to form a tournament tree $T_1$, and the subtrees containing the other nodes to form another tournament tree $T_2$.
See Algorithm~\ref{alg:tournament-cut}.

\begin{algorithm}[H]
\caption{$\mathsf{\cut}(T,u)$ }\label{alg:tournament-cut}
\begin{algorithmic}[1]
\State $x \gets p(u)$; $y \gets u$
\State Create two empty tournament trees $T_1, T_2$
\State $T_1 \gets$ $T(y)$
\While{$x \neq \nil$}
\If{$y = \lef(x)$}
\State $T_2 \gets$ \Call{$\link$}{$T_2, T(\rig(x))$}
\Else
\State $T_1 \gets$ \Call{$\link$}{$T(\lef(x)), T_1$}
\EndIf
\State $y \gets x$; $x \gets p(x)$
\EndWhile
\end{algorithmic}
\end{algorithm}
\end{itemize}

\begin{theorem}\label{thm:tournament}
There is an algorithm that solves the dynamic partial sorting problem which performs the $\psort(\ell,k)$ operation in time $O(k\log (n))$, and performs the $\link(\ell,\ell')$, $\cut(\ell,x)$ and $\changeval(\ell,x,x')$ operations in time $O(\log (n))$, where $n$ is the size of the list $\ell$.
\end{theorem}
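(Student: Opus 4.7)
The plan is to treat the four operations separately, reducing the cost of each to a multiple of the tournament-tree height and then invoking Lemma~\ref{lem:fibonacci} to conclude that the height of any tree of size $n$ is at most $\log_\varphi(n) = O(\log n)$. For the correctness of Algorithm~\ref{alg:tournament-psort}, the plan is to prove by induction on $i$ that, at the moment iteration $i$ of the outer for-loop begins, the variable $u$ satisfies $\val(u) = x_i$, so that the output line produces the correct next element. The inductive step rests on the observation preceding the algorithm: $x_{i+1}$ must occur as $\val(v)$ for some $v$ that is the subordinate of a node on $P_1\cup\cdots\cup P_i$, where $P_j$ denotes the principal path of value $x_j$. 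Since the inner while loop during iteration $i$ descends $P_i$ from its topmost node to its origin and inserts every subordinate encountered, the priority queue $Q$ at the end of iteration $i$ contains exactly those unoutputted candidates, and the final $\deletemin$ therefore sets $u$ to a node of value $x_{i+1}$.

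For the running time of $\psort$, Lemma~\ref{lem:fibonacci} bounds the length of each principal path by $O(\log n)$, so the inner loop performs $O(\log n)$ work and at most $O(\log n)$ insertions per iteration. Summing over $k$ iterations gives $O(k\log n)$ insertions and $k$ $\deletemin$ calls against a queue of size at most $O(k\log n)$; implementing $Q$ as a Fibonacci heap makes inserts $O(1)$ amortized and $\deletemin$ calls $O(\log(k\log n)) = O(\log n)$ amortized, so the aggregate priority-queue cost is $O(k\log n)$, as required. The correctness of $\changeval$ is immediate, since after setting $\val(x)\gets x'$ only ancestors of $x$ can hold inconsistent values and the loop repairs them bottom-up; the $O(\log n)$ bound follows from the length of the path from $x$ to the root. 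For $\link$, the descent along the right spine of $T_1$ to the splice point visits at most one node per level, and the subsequent upward repair sweep performs $O(1)$ work per level together with at most one rotation in total. A short subargument will establish that a single rotation suffices because inserting the new composite node $u$ can increase the height of any ancestor subtree by at most one, reproducing the standard invariant used in the join procedure for balanced binary search trees from \cite{Tarjan85}.

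The analysis of $\cut$ is the main obstacle, because the algorithm strings together $\Theta(\log n)$ successive $\link$ calls and a naive bound yields only $O(\log^2 n)$. The plan is to import the classical telescoping argument from \cite{Tarjan85}: each individual $\link(A,B)$ runs in $O(|h(A)-h(B)|+1)$ time, and as one ascends from $u$ the subtree being appended on each side has height close to that of the accumulated $T_1$ or $T_2$, with the discrepancies matched by the corresponding height increments of the accumulator. Summing these discrepancies along the ascent produces a telescoping sum that collapses to the final heights of $T_1$ and $T_2$, each bounded by $\log_\varphi(n)$. Hence $\cut$ runs in $O(\log n)$ time, completing the proof of Theorem~\ref{thm:tournament}.
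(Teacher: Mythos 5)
Your proposal is correct and follows essentially the same line of reasoning as the paper: bound every operation by the tree height via Lemma~\ref{lem:fibonacci}, observe that $\link$ costs $O(|h(T_1)-h(T_2)|)$ with a single rebalancing rotation, and rescue $\cut$ from a naive $O(\log^2 n)$ bound by telescoping the height differences of the successive $\link$ calls. You are slightly more explicit than the paper in two places, both to the good: you spell out the inductive invariant for Algorithm~\ref{alg:tournament-psort} (that $\val(u)=x_i$ at the start of iteration $i$, with $Q$ holding the not-yet-output subordinates along $P_1\cup\cdots\cup P_i$), and you name a concrete priority-queue implementation (Fibonacci heap with $O(1)$ amortized insert) where the paper only appeals to ``an efficient priority queue''; the latter is actually needed, since a plain binary heap would charge $\Theta(\log(k\log n))$ per insert across $\Theta(k\log n)$ inserts and miss the claimed bound.
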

\begin{proof}
We analyze the time complexity of  the above operations.
\begin{itemize}

\item[(a)] {\bf $\psort(\ell,k)$.} By Lemma~\ref{lem:fibonacci}, every path of the tournament tree is bounded by $\log_\varphi (n)$. This means that when the $\psort(\ell,k)$ operation outputs an element $x$, it inserts at most $\log_\varphi n$ nodes into the priority queue. Hence the priority queue has size bounded by $k\log_\varphi n$. If we use an efficient priority queue implementation, the time complexity of the operation is $O(k\log (n))$.


\item[(b)] {\bf $\changeval(\ell,x,x')$.} By Lemma~\ref{lem:fibonacci} we must modify at most $\lceil \log_\varphi(n) \rceil + 1$ nodes, and each modification consists of an assignment and a two-way comparison, each of which takes constant time. Thus, we have at most $2 (\lceil \log_\varphi(n) \rceil + 1)$ constant-time operations, which makes $\changeval(\ell,x,x')$ an $O(\log (n))$ operation.

\item[(c)] {\bf $\link(\ell,\ell')$.} Let $T_1,T_2$ be the tournament trees of $\ell$ and $\ell'$ respectively. Let $m=|h(T_1)-h(T_2)|$. As discussed above, the $\link(T_1,T_2)$ operation performs at most one rotation and up to $m$ many changes to the values of nodes while walking the path from $u$ to the root. Therefore the $\link(T_1,T_2)$ operation takes time $O(m)$, which is $O(\log (n))$. 

\item[(d)] {\bf $\cut(\ell,x)$.} Let $T$ be the tournament tree of $\ell$ and $u$ be the leaf with value $x$.
Let $P=\{u_0,u_1,u_2,\ldots,u_k\}$ be the path in $T$ from $u_0=u$ to the root of $T$ where $u_{i+1}=p(u_i)$ for all $0\leq i<k$. By Algorithm~\ref{alg:tournament-cut}, the $\cut(T,u)$ operation separates $T$ into  a collection of tournament trees $$\widehat{T}_1, \widehat{T}_2, \ldots, \widehat{T}_k$$ where each $\widehat{T}_i$ is either the left or the right subtree of $u_i$. Since $T$ is balanced, one could easily prove by induction on $i$ that $$h(\widehat{T}_i)\leq 2i-1.$$

The $\cut(T,u)$ operation then iteratively joins the trees $\widehat{T}_1,\ldots,\widehat{T}_k$ to form two trees $T_1$ and $T_2$, where $T_1$ contains all leaves to the left of and including $u$, and $T_2$ contains all leaves to the right of $u$.
We note from (c) that the time required for any $\link$ operation is linear on the height difference  between the two trees being joined.
The total running time of the sequence of $\link$ operations performed is therefore at most
\begin{align*}
    2 \sum_{i\geq 1}^{k-1} \left(h\left(\widehat{T}_{i+1}\right)-h\left(\widehat{T}_i\right)\right) &= 2 \left(h\left(\widehat{T}_k\right)-h\left(\widehat{T}_1\right)\right) \\
                                                                              &\leq 2(2k-1).
\end{align*}
The value of $k$ is at most $h(T)$ which is bounded by $\log_\varphi (n)$. Thus, the total time required for $\cut(T,u)$ is $O(\log(n))$.
\end{itemize}\qed
\end{proof}

\section{Layered Tournament Trees}\label{sec:ltt}

In this section we present an alternative solution to the dynamic partial sorting problem, where the running time of $\psort(\ell,k)$ is (almost) independent from $n$. The algorithm uses a data structure that consists of layers of tournament trees, which we call the {\em layered tournament tree} (LTT) data structure.
Intuitively, the LTT data structure maintains a number of layers that extend downwards, where each layer consists of a number of tournament trees. The tree in the top layer is the tournament tree of $\ell$.  A tree in any lower layer stores a principal path in a tree in the layer above.
Formally, we make the following definitions. Throughout, let $\ell$ be a list of distinct numbers.

\begin{definition}\label{def:team} Let $T$ be the tournament tree of $\ell$. Let $P=\{u_0,u_1,\ldots,u_k\}$ be a principal path in $T$ where $u_0$ is the origin of $P$ and $u_{i+1}=p(u_i)$ for $0\leq i<k$.
We define the {\em team of $P$} as the list of numbers
\[
t=\val(\sub(u_k)),\val(\sub(u_{k-1})),\ldots,\val(\sub(u_{1})).
\]
A {\em team} in the tournament tree $T$ is a team of some principal path $P$ in $\ell$.
\end{definition}
Note that only a principal path with more than one element has a team. We generally use the small case letter $t$ to denote a team.
\begin{definition}\label{def:ltt}
We define a {\em layered tournament tree} ({\em LTT}) of $\ell$ as a set  $\Gamma_\ell$ of tournament trees that satisfies the following:
\begin{itemize}
\item If $\ell$ consists of a single number $x$, then $\Gamma_\ell=\{S\}$ where $S$ consists of a single node whose value is $x$.
\item Otherwise, $\Gamma_\ell$ contains a tournament tree $T$ of $\ell$ as well as an LTT $\Gamma_t$ for each team $t$in $T$. In other words,
\[
\Gamma_\ell= \{T\}\cup \bigcup \left\{\Gamma_{t}\mid t \text{ is a team in } T\right\}.
\]
\end{itemize}
\end{definition}

When the list $\ell$ is clear from the context, we drop the subscript writing $\Gamma_\ell$ simply as $\Gamma$.
We next define layers in a layered tournament tree $\Gamma$ of $\ell$.
\begin{definition}\label{def:layer}  Let $T$ be a tournament tree in  $\Gamma$. We say that
\begin{itemize}
\item  $T$ is {\em in layer $0$} of $\Gamma$ if $T$ is a tournament tree of $\ell$; and
\item  $T$ is {\em in layer $i$} of $\Gamma$, where $i>0$, if $T$ is a tournament tree of a team $t$ in a layer-$(i-1)$ tree in $\Gamma$.
\end{itemize}
We call $\ell$ the {\em layer-$0$ team}, and the team $t$ mentioned above a {\em layer-$i$ team} in $\Gamma$.
If a tree $T$ is in layer $i$ of $\Gamma$, we call it a {\em layer-$i$ tree} in $\Gamma$. The {\em layer number} of $\Gamma$ is the maximum $i\geq 0$ such that a tree is in layer $i$ of $\Gamma$.
\end{definition}
Let $P$ be a principal path in a layer-$i$ tree of $\Gamma$, where $i\geq 0$ and the length of $P$ is at least 1.
By Def.~\ref{def:team} and Def.~\ref{def:layer}, $\Gamma$ contains a tournament tree $T$ of the team of $P$ in layer-$(i+1)$. We call $T$ the {\em team tree} of $P$. The {\em team tree} $\Team(u)$ of any node $u$ is the team tree of the principal path containing $u$. 

Recall that the origin of a principal path $P$ is the leaf in $P$. We introduce the following notions:
\begin{itemize}
\item Suppose $u$ is an internal node in a layer-$i$ tree $T\in \Gamma$. We define $\down(u)$ as the origin $v$ of the principal path in the team tree $\Team(u)$ such that $\val(v)=\val(\sub(u))$.
\item Suppose $u$ is a leaf in a layer-$i$ tree $T\in \Gamma$ where $i>0$. We define $\upp(u)$ as the internal node $v$ in a layer-$(i-1)$ tree such that $\down(v)=u$.
\end{itemize}
This finishes the description of the LTT data structure; see Figure~\ref{fig:LTT} for an example of an LTT.

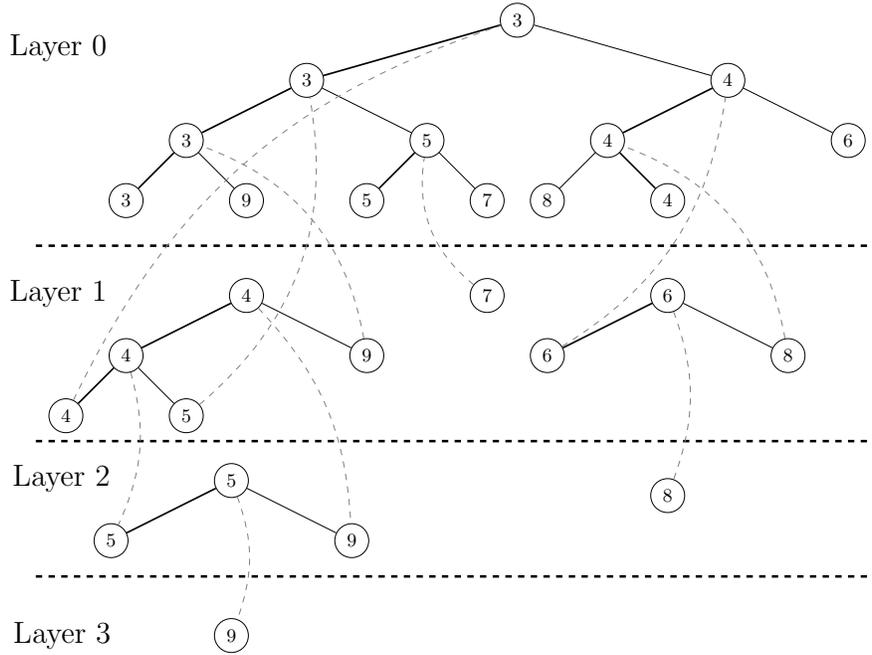
\begin{figure}[!h]
\tikzstyle{level 1}=[sibling distance=7cm]
\tikzstyle{level 2}=[sibling distance=4cm]
\tikzstyle{level 3}=[sibling distance=2cm]
\scalebox{.8}{
\begin{tikzpicture}[tnode/.style={thin,circle,draw},level distance=1cm,emph/.style={edge from parent/.style={black,thick,draw}},norm/.style={edge from parent/.style={black,thin,draw}}]
\node[tnode](root1){3}
	child[emph] {
		node[tnode](first3){3}
		child[emph] {
			node[tnode](3){3}
				child {
					node[tnode](three){3}
				}
				child[norm] {
					node[tnode](mark1){9}
				}
		}
		child[norm] {
			node[tnode](first5){5}
				child[emph] {
					node[tnode]{5}
				}
				child[norm] {
					node[tnode](first7){7}
				}
		}			
	}
	child {
		node[tnode](root3){4}
			child[emph] {
				node[tnode](first4){4}
					child[norm] {
						node[tnode](first8){8}
					}
					child[emph] {
						node[tnode](mark2){4}
					}
			}
			child {
				node[tnode](first16){6}
			}
	};

\tikzstyle{level 1}=[sibling distance=4cm]
\tikzstyle{level 2}=[sibling distance=2cm]

\node[tnode, below=of mark1](root2){4}
	child[emph] {
		node[tnode](four){4}
			child[emph] {
				node[tnode](second4){4}
			}
			child[norm] {
				node[tnode](second5){5}
			}
	}
	child {
		node[tnode](second9){9}
	};
	
\draw[dashed,very thick] (-8,-3.75) to (6,-3.75);

\node[above left=of 3]{\Large Layer 0};

\node[tnode, below=of first7](root5){7};

\node[above left=of four,yshift=-0.5cm,xshift=1cm]{\Large Layer 1};

\node[tnode, below=of mark2](root4){6}
	child[emph] {
		node[tnode](second6){6}
	}
	child[norm] {
		node[tnode](second8){8}
	};
	
\draw[dashed,very thick] (-8,-7) to (6,-7);

\node[tnode, below=of second5, yshift=0.5cm,xshift=0.75cm](third5root){5}
    child[emph]{
        node[tnode](third5leaf){5}}
    child[norm]{
        node[tnode](third9){9}};

\node[above left=of third5leaf,yshift=-0.5cm,xshift=1.3cm]{\Large Layer 2};

\node[tnode, below=of root4,yshift=-1.75cm](third8){8};

\draw[dashed,very thick] (-8,-9.25) to (6,-9.25);

\node[tnode, below=of third5root, yshift=-1cm](fourth9){9};

\node[left=of fourth9,xshift=-0.6cm]{\Large Layer 3};

	
\path[-,gray,dashed]
	(second4) edge[bend left=23] node {} (root1)
	(second5) edge[bend right=30] node {} (first3)
    (second9) edge[bend right=30] node {} (3)
	(root5) edge[bend left=30] node {} (first5)
	(second6) edge[bend right=25] node {} (root3)
	(second8) edge[bend right=30] node {} (first4)
	(four) edge[bend left=20] node {} (third5leaf)
    (third5root) edge[bend left=20] node{} (fourth9)
    (root2) edge[bend left=20] node {} (third9)
    (root4) edge[bend left=20] node {} (third8);

\end{tikzpicture}}\caption{The LTT of the list $\ell=3,9,5,7,8,4,6$.  The $\upp$ and $\down$ nodes are indicated by a dashed grey line.  The layer number is 3.
The team of 3 is a list 4,5,9. The team of 5 is a list with a single element 7. The team of 4 is 6,8. These teams form their own team trees at layer 1.
}\label{fig:LTT}
\end{figure}

\medskip

\noindent {\em Remark.} \
Intuitively the layered tournament tree is similar in concept to a dynamic tree as described by Tarjan and Sleator \cite{Tarjan85}. However by Def.~\ref{def:layer} a dynamic tree has only two layers while a layered tournament tree can have arbitrarily-many.

\medskip

In subsequent sections, we describe the $\psort$, $\link$, $\cut$ and $\changeval$ operations for the LTT data structure.
The factors that determine the time complexity of these operations are \ 1) the height of a layer-$i$ tree in an LTT $\Gamma$ for $i\geq 0$; and \ 2) the layer number in the LTT $\Gamma$.
%

To analyze the height of a layer-$i$ tree in a LTT $\Gamma$ for any $i\geq 0$, we recall the following function.

\begin{definition} Let $b>1$ be a real number.
The {\em iterated logarithm with base $b$} $\log^*_b(n)$ of a number $n>b$ is the smallest $i\geq 0$ such that
$$\underbrace{\log_b\cdots \log_b}_i (n)\leq 1.$$
\end{definition}
It is known that the iterated logarithm function is defined for all $b\leq e^{1/e}$. The function $\log_b^{n}$ is known to be extremely slow-growing; for example, when $b$ is the golden ratio $\varphi$, $\log^*_b(10^6)=6$ and $\log^*_b(10^{10000})=7$.
More precisely, $\log^*_b(n)$ is the inverse of the power tower function with base $b$ defined as
\[
b\uparrow\uparrow n = \underbrace{b^{b^{\iddots^b}}}_{n}
\]
Hence we have the following lemma, which we state without a proof.
\begin{lemma}\label{lem:iterated log}
For any $b\geq e^{1/e}$, for all $i\geq 0$ we have
\[
    \exists n'>0\forall n>n':\ \log^*_b(n) \leq \underbrace{\log_b\cdots \log_b}_i(n).
\]
\end{lemma}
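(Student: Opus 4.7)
\textbf{Proof plan for Lemma~\ref{lem:iterated log}.}
The plan is to exploit the inverse relationship between $\log^*_b$ and the power tower function $b\uparrow\uparrow k$ in order to compare the two quantities on dyadic-like intervals determined by the power tower. I would first unfold the definition: for $b>1$ and any integer $m\geq 1$,
\[
    \log^*_b(n) = m \iff b\uparrow\uparrow(m-1) < n \leq b\uparrow\uparrow m,
\]
which is immediate from the recursive definition of $\log^*_b$ together with $b\uparrow\uparrow m = b^{b\uparrow\uparrow(m-1)}$. This identity gives precise control on where $\log^*_b$ takes each value and will be the workhorse of the argument.

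Next, I would fix $i\geq 0$ and consider an arbitrary $n$ with $\log^*_b(n)=m$, where $m$ will be taken large later. Using the monotonicity of $\log_b$ on positive reals and the inverse relation above, applying $\log_b$ exactly $i$ times to the lower endpoint of the interval gives
\[
    \log_b^{(i)}(n) > \log_b^{(i)}\bigl(b\uparrow\uparrow(m-1)\bigr) = b\uparrow\uparrow(m-1-i),
\]
valid whenever $m-1-i \geq 1$. Thus the lemma reduces to showing $m \leq b\uparrow\uparrow(m-1-i)$ for all sufficiently large $m$.

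The final step is to note that for $b \geq e^{1/e}$ the function $k \mapsto b\uparrow\uparrow k$ grows faster than any iterated exponential of bounded height, and in particular much faster than the linear function $k \mapsto k+i+1$. Hence there exists $m_0>i+1$ such that for every $m\geq m_0$ we have $m \leq b\uparrow\uparrow(m-1-i)$. Setting $n' = b\uparrow\uparrow(m_0-1)$, any $n>n'$ satisfies $\log^*_b(n) = m \geq m_0$, and then combining the two displayed inequalities yields $\log^*_b(n) \leq \log_b^{(i)}(n)$, as required.

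The only real obstacle is bookkeeping: making sure the off-by-one shifts between $\log^*_b(n)$, the endpoints $b\uparrow\uparrow(m-1)$ and $b\uparrow\uparrow m$, and the $i$-fold iterated logarithm are handled consistently. Once the inverse relation is stated cleanly, everything else is a short appeal to the super-exponential growth of the power tower, so I would not expect further conceptual difficulty.
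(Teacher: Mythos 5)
The paper states this lemma \emph{without any proof}, offering only the remark that $\log^*_b$ is the inverse of the power tower $b\uparrow\uparrow k$; your argument is a correct and careful formalization of exactly that hint, so there is no authorial proof to compare against. The unfolding $\log^*_b(n)=m \iff b\uparrow\uparrow(m-1)<n\leq b\uparrow\uparrow m$, the monotonic application of $\log_b$ $i$ times, the reduction to $m \leq b\uparrow\uparrow(m-1-i)$, and the choice $n'=b\uparrow\uparrow(m_0-1)$ are all sound and handle the off-by-one issues cleanly. One caveat worth flagging: your final appeal to ``super-exponential growth of the power tower'' is valid for $b>e^{1/e}$ (and in particular for $b=\varphi$, the only base the paper actually uses), but it fails at the boundary $b=e^{1/e}$, where $b\uparrow\uparrow k$ increases monotonically to the finite limit $e$, so $m\leq b\uparrow\uparrow(m-1-i)$ does not eventually hold. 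This is really a defect in the lemma as printed rather than in your argument --- at $b=e^{1/e}$ the map $\log_b$ has a fixed point at $e$, so $\log^*_b(n)$ is not even finite once $n>e$ --- but you should state your argument for $b>e^{1/e}$ rather than silently asserting growth that is false at the endpoint.
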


\begin{lemma} \label{lem:number of layers}
For any $i\geq 1$, the size of any layer-$i$ team is at most $\underbrace{\log_\varphi\cdots \log_\varphi}_{i} (n)$, where $n$ is the size of the list $\ell$. Furthermore layer number of the LTT data structure is at most $\log^*_\varphi (n)$.
\end{lemma}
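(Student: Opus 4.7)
The plan is to prove both claims together by induction on the layer index, leveraging Lemma~\ref{lem:fibonacci} at each step.

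First I would formalise the key connection between a layer and the one below it. By Def.~\ref{def:team}, the team $t$ of a principal path $P=\{u_0,u_1,\ldots,u_k\}$ has exactly $k$ elements, namely one subordinate per internal node $u_1,\ldots,u_k$ on $P$. In other words, the size of the team equals the length of $P$, which is bounded by the height of the enclosing tree. Combining this with Def.~\ref{def:layer}, a layer-$i$ team (for $i\geq 1$) is the team of some principal path in a layer-$(i-1)$ tree, and the leaves of a layer-$(i-1)$ tree are precisely the elements of the corresponding layer-$(i-1)$ team (with the convention that the ``layer-$0$ team'' is $\ell$ itself, of size $n$).

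For the first claim, I would induct on $i\geq 1$. For the base case $i=1$: the layer-$0$ tree has $n$ leaves, so by Lemma~\ref{lem:fibonacci} its height is at most $\log_\varphi(n)$, and any principal path has length at most $\log_\varphi(n)$; hence every layer-$1$ team has size at most $\log_\varphi(n)$. For the inductive step, assume every layer-$(i-1)$ team has size at most $\underbrace{\log_\varphi\cdots\log_\varphi}_{i-1}(n)$. Then every layer-$(i-1)$ tree has at most that many leaves, so by Lemma~\ref{lem:fibonacci} its height is at most
\[
\log_\varphi\!\left(\underbrace{\log_\varphi\cdots\log_\varphi}_{i-1}(n)\right) = \underbrace{\log_\varphi\cdots\log_\varphi}_{i}(n),
\]
and the team of any principal path inside it has at most this many elements, completing the induction.

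For the second claim, let $L$ be the layer number of $\Gamma$. If $L=0$ there is nothing to prove, so suppose $L\geq 1$. Then $\Gamma$ contains a layer-$L$ tree, which by Def.~\ref{def:layer} means that some layer-$(L-1)$ tree admits a principal path of length at least $1$; such a tree must have at least two leaves, so some layer-$(L-1)$ team has size at least $2$. By the first part, $2 \leq \underbrace{\log_\varphi\cdots\log_\varphi}_{L-1}(n)$, and in particular the $(L-1)$-fold iterate of $\log_\varphi$ applied to $n$ exceeds $1$. By the definition of $\log^*_\varphi$ as the least index at which the iterated logarithm drops to $1$ or below, this forces $L-1<\log^*_\varphi(n)$, i.e.\ $L\leq\log^*_\varphi(n)$.

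The main obstacle is not a calculation but making precise the bookkeeping that links path length, team size, number of leaves, and height across consecutive layers; once that is set up, both parts follow immediately from Lemma~\ref{lem:fibonacci} and the definition of iterated logarithm. A minor subtlety worth double-checking is the edge case where $\log_\varphi^{(i)}(n)$ falls strictly between $1$ and $2$: there the layer-$i$ team still has size at most $1$, so no further layer exists and the inequality $L\leq\log^*_\varphi(n)$ remains tight.
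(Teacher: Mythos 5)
Your proof is correct and follows the same route as the paper: both reduce the first claim to the chain ``team size $=$ principal-path length $\le$ height of enclosing tree $\le \log_\varphi(\text{number of leaves})$'' via Lemma~\ref{lem:fibonacci}, and both obtain the second claim as a direct consequence. You merely spell out the induction on $i$ and the derivation of $L\le\log^*_\varphi(n)$ in more detail than the paper, which dispatches both in a single sentence.
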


\begin{proof}
By Lemma~\ref{lem:fibonacci} the height of any tournament tree is at most $\log_\varphi (m)$ where $m$ is the number of leaves in the tree. The first statement of the lemma  follows directly from the fact that the number of leaves in a layer-$i$ tree is at most the height of a layer-$(i-1)$ tree in $\Gamma$. The second statement follows directly from the first statement.\qed
\end{proof}
As an example, suppose the list $\ell$ contains a million numbers.  The layer number in the LTT of $\ell$ is at most $\log^*_{\varphi}(10^6)\leq 6$.

\section{The $\psort(\ell,k)$ Operation With LTT} \label{sec:psort}

We now describe the algorithm for solving the dynamic partial sorting problem using the LTT data structure. Similarly to  Section~\ref{sec:TT}, we assume that a list $\ell$ is represented by an LTT $\Gamma$. More specifically, we assume that the elements of $\ell$ are the leaves of the layer-0 tree in $\Gamma$.  In this section we will refer to a list and its LTT interchangeably.
 All terms that relate to a team tree $T$ carry forward to the corresponding list $\ell$. Hence the {\em nodes}, {\em root}, {\em leaves}, and {\em internal nodes} of $\ell$ refer to the equivalent concepts in $T$.

\smallskip

We describe the partial sorting operation $\psort(\ell,k)$ on an LTT $\Gamma$ of the list $\ell$. We use $$x_1<x_2<\ldots<x_n$$ to denote the numbers in $\ell$ in ascending order.
Intuitively the algorithm is similar to the $\psort(\ell,k)$ operation described in Section~\ref{sec:TT}.
The algorithm searches for and outputs each $x_i$ iteratively by exploring the layer-0 tournament tree $T$. The smallest number $x_1$ is the value of the root of $T$. If $k=1$ or $\ell$ contains only one element, then the algorithm  stops after outputting $x_1$. Otherwise, to find the second-smallest number $x_2$ in $\ell$, let $P$ be the principal path of the root of $T$. The number $x_2$ is the least number in the team of $P$. Unlike Algorithm~\ref{alg:tournament-psort}, where we check through the subordinates of all nodes in $P$, here we recursively apply the partial sort operation on the tournament tree of the layer-1 team of $P$. In this way, the search continues in a lower layer.

To formally describe the $\psort(\ell,k)$ operation, we use an {\em iterator}, which is defined as follows.

\begin{definition}\label{def:iterator} Let $\ell$ be a list of numbers.
An {\em iterator of $\ell$} is a data structure $\It(\ell)$ that supports an operation $\nxt(\ell)$ with the following property: \ \ In the $i$th call to $\nxt(\ell)$, the operation outputs $x_i$ if $i\leq n$; and outputs $\nil$ otherwise.
\end{definition}


An iterator $\It(\ell)$ maintains a priority queue $Q$, which is going to contain nodes in $T$.
The $\psort(\ell,k)$ operation amounts to creating an iterator $\It(\ell)$ and calling $\nxt(\ell)$ $k$ times to obtain the list $x_1,x_2,\ldots,x_k$.
We use $u_i$ to denote the leaf with value $x_i$ in the layer-0 tree of $\ell$ for $1\leq i\leq n$.
For convenience, we consider the output of $\nxt(\ell)$ to be the leaf $u_i$, rather than its value $x_i$.

%

To create an iterator for $T$, the algorithm simply creates an empty priority queue $Q$.
We describe the $\nxt(\ell)$ operation by induction on the number of elements in $\ell$.
When the operation $\nxt(\ell)$ is called the first time, we return the origin of $\Path(r)$, where $r$ is the root of $\ell$.
In subsequent calls to $\nxt(\ell)$, if $\ell$ contains only one element, then the algorithm returns $\nil$. Suppose $\ell$ contains more than one element, and assume that we have defined iterators of lists with fewer elements than $\ell$.

Suppose $i\geq 1$ and we have made $i$ calls to $\nxt(\ell)$ which outputs the nodes
$$ u_1,u_2,\ldots,u_i $$.
Algorithm~\ref{alg:next} implements the $\nxt(\ell)$ operation for the $(i+1)$th call.
%
\begin{algorithm}[H]
	\caption{$\nxt(\ell)$ \qquad (The $(i+1)$th call)}\label{alg:next}
	\begin{algorithmic}[1]
    \If{$\Team(u_i)$ is not empty}
        \State Create an iterator $\It(\Team(u_{i}))$
        \State $a\gets \nxt(\Team(u_{i}))$
        \State Insert $\upp(a)$ to $Q$ with value $\val(\sub(a))$
    \EndIf
    \If{$Q$ is not empty}
        \State $x\gets \deletemin(Q)$
        \State $u_{i+1}\gets$ the origin of $\Team(\sub(x))$
        \State $b\gets \nxt(\Team(x))$
        \If{$\upp(b) \neq \nil$}
        \State Insert $\upp(b)$ to $Q$ with value $\val(\sub(b))$
        \EndIf
        \State Output $u_{i+1}$
    \Else
        \State Output $\nil$
    \EndIf
	\end{algorithmic}
\end{algorithm}

To show the correctness of the algorithm  above, we make the following definition:
\begin{definition}
Let $v$ be a node in a tournament tree $T$. The {\em superordinate} of $v$ is a node $\sup(v)$ in $T$ whose subordinate belongs to the principal path $\Path(v)$.
The {\em superordinate set} of a set $U$ of nodes is
\[
    \sup(U) = \{\sup(v)\mid v\in U\}.
\]
\end{definition}
For the next definition, we take a set $U$ of nodes in $T$.
\begin{definition}
A node $v$ is an {\em $U$-candidate} if there is some $u\in U$ such that $v\in \Path(u)$ and for any $w\in \Path(u)$, $\val(\sub(w))<\val(\sub(v))$ if and only if $w\in \sup(U)$.
We denote the set of $U$-candidates as $\Can(U)$.
\end{definition}
\begin{lemma}\label{lem:gamma1}
For every $1\leq i<n$, $\sup(u_{i+1})\in \Can(\{u_1,\ldots,u_{i}\})$.
\end{lemma}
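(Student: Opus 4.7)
The plan is to produce the witness $u_j \in \{u_1,\ldots,u_i\}$ required by the definition of $\Can$, and then verify the ``if and only if'' clause for that $u_j$. Let $v = \sup(u_{i+1})$ and note $\val(\sub(v)) = \val(u_{i+1}) = x_{i+1}$. I would first argue that $v$ is well-defined for $i \ge 1$: since $u_{i+1}$ is not the root (its value is strictly greater than the minimum $x_1$), the top node of $\Path(u_{i+1})$ has a parent, and that parent is the unique node of the layer-$0$ tree whose subordinate lies on $\Path(u_{i+1})$.

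To identify $j$, I would use the tournament-tree property together with the distinctness of values. One child of $v$ lies on $\Path(v)$ and carries value $\val(v)$, while $\sub(v)$ lies on $\Path(u_{i+1})$ and carries value $x_{i+1} \neq \val(v)$. Since $\val(v) = \min\{\val(\lef(v)),\val(\rig(v))\}$, this forces $\val(v) < x_{i+1}$. Writing $\val(v) = x_j$ (which must hold for some $j$, because every internal value is realised as a leaf value), we get $j \le i$ and $v \in \Path(u_j)$, discharging the first clause of the $U$-candidate definition with $u = u_j$.

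Next I would verify the biconditional for every $w \in \Path(u_j)$. Writing $\val(\sub(w)) = x_m$, the inequality $\val(\sub(w)) < x_{i+1}$ is exactly $m \le i$. In that case $\sub(w) \in \Path(u_m)$, so by uniqueness of the superordinate $w = \sup(u_m) \in \sup(\{u_1,\ldots,u_i\})$. For the converse, if $w = \sup(u_m)$ with $m \le i$, then $\val(\sub(w)) = \val(u_m) = x_m < x_{i+1}$. The only real obstacle is bookkeeping: the proof hinges on knowing that $\sup$ is genuinely single-valued on non-root nodes, so that $\sup(U)$ is an unambiguous set, and on the fact that the values appearing in the tree are exactly $\{x_1,\ldots,x_n\}$, so that the strict inequality $<x_{i+1}$ translates cleanly to $m \le i$. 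Once these two observations are on the table, the lemma reduces to the two short equivalence checks above.
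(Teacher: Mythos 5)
Your proof is correct and follows essentially the same route as the paper: identify $u_j$ as the origin of $\Path(\sup(u_{i+1}))$ via $\val(\sup(u_{i+1})) < x_{i+1}$, then verify the biconditional in both directions by translating $\val(\sub(w)) < x_{i+1}$ into $\val(\sub(w)) = x_m$ with $m \le i$ and using uniqueness of the superordinate. The paper dresses this up as an induction on $i$ (with a separate base case for $i=1$), but its inductive step never actually invokes the hypothesis, so your direct presentation is the same argument minus the unused scaffolding; you are also slightly more explicit about well-definedness and single-valuedness of $\sup$, which the paper takes for granted.
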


\begin{proof}
We prove this lemma by induction on $i$. By definition of the tournament tree $T$, $u_2$ is the subordinate of a node  $v\in\Path(u_1)$. Furthermore, $\val(u_2)$  is the smallest number in the team of $\val(u_1)$. Hence $\sup(u_2)\in \Can(\{u_1\})$.

Suppose the statement holds for $i\geq 1$. Let $x$ be the superordinate of the node $u_{i+1}$.
Our goal is to show that $x\in \Can(\{u_1,\ldots,u_i\})$. For any node $v\in \Path(x)$, we have $\val(v)<\val(u_{i+1})$ as otherwise $v$ would not be in the same principal path as $x$. Hence the head of the principal path $\Path(x)$ is $u_j$ for some $1\leq j\leq i$.

Let $w$ be a node in $\Path(x)$. Suppose $\val(\sub(w))<\val(\sub(x))$. Since $\val(\sub(x))=\val(u_{i+1})$,  the team of $\Path(\sub(w))$ would contain a number that has smaller value than $u_{i+1}$. Therefore $w$ must be $\sup(u_j)$ for some $1\leq j\leq i$. This means that $w\in \sup(\{u_1,\ldots,u_i\})$. Conversely, suppose $w\in \sup(\{u_1,\ldots,u_i\})$. Then by choice of $u_{i+1}$ we have $\val(\sub(w))<\val(u_{i+1})=\val(\sub(x))$. Thus $x$ is in $\Can(\{u_1,\ldots,u_i\})$. \qed
\end{proof}
The next lemma implies the correctness of Alg.~\ref{alg:next}.

\begin{lemma}\label{lem:next0-correct}
For any $i\geq 1$, the $i$th call to $\nxt(\ell)$ returns the node $u_i$ if $i\leq n$, and $\nil$ otherwise.
\end{lemma}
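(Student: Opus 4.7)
The plan is strong induction on $n=|\ell|$, matching the recursive structure of the LTT and the recursive calls $\nxt(\Team(\cdot))$, together with an inner induction on the call number $i$ governed by a loop-style invariant on the contents of $Q$. Once the invariant is in place, Lemma~\ref{lem:gamma1} will identify the output of each $\deletemin(Q)$.

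For the outer base $n=1$ the single leaf $r$ is the origin of $\Path(r)$, so the first call returns $u_1=r$, and on every subsequent call the team of $\Path(r)$ is empty, the insertion in lines~1--5 of Algorithm~\ref{alg:next} is skipped, and $\deletemin$ on the empty queue returns $\nil$. For the outer inductive step, assume the lemma for every list strictly shorter than $\ell$; by Lemma~\ref{lem:number of layers} every team tree used by $\nxt(\ell)$ is strictly smaller than $\ell$, so the hypothesis applies to each nested iterator. The inner induction then maintains the following invariant: just before the test ``\textbf{if} $Q$ is not empty'' in the $(i{+}1)$th call, the priority queue contains exactly
\[
\{(v,\val(\sub(v))) : v\in \Can(\{u_1,\dots,u_i\})\},
\]
each $v$ keyed by $\val(\sub(v))$. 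The base $i=0$ is immediate: the first call returns the origin of $\Path(r)$, which has value $\val(r)=x_1$ and is therefore $u_1$, leaving $Q$ empty.

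The inductive step amounts to three checks. (1) The opening block of Algorithm~\ref{alg:next} invokes $\nxt(\Team(u_i))$; by the outer hypothesis this returns the leaf $a$ of smallest value in that team, and $\upp(a)$ is the unique node in $\Path(u_i)$ realising that smallest subordinate value. Unfolding the definition of $\Can$ shows this is exactly the candidate contributed by $\Path(u_i)$, so after the insertion $Q$ equals $\{(v,\val(\sub(v))) : v\in \Can(\{u_1,\dots,u_i\})\}$. (2) $\deletemin(Q)=\sup(u_{i+1})$: Lemma~\ref{lem:gamma1} puts $\sup(u_{i+1})$ in $\Can(\{u_1,\dots,u_i\})$ with key $\val(u_{i+1})=x_{i+1}$; any other candidate $v'\in Q$ has $\val(\sub(v'))$ equal to $\val(u_m)$ for some leaf $u_m$ of $\ell$, and instantiating the defining iff of $\Can$ with $w=v'$ shows $v'\not\in \sup(\{u_1,\dots,u_i\})$, hence $u_m\not\in\{u_1,\dots,u_i\}$ and so $\val(\sub(v'))\geq x_{i+1}$ (equality excluded since values are distinct). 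Thus $u_{i+1}$ is correctly set to the origin of the principal path containing $\sub(\sup(u_{i+1}))$, i.e.\ the leaf of value $x_{i+1}$. (3) The closing call $b\gets \nxt(\Team(x))$ advances the iterator on $\Team(x)$ one more step, and $\upp(b)$ becomes the new candidate from $\Path(x)$ in $\Can(\{u_1,\dots,u_{i+1}\})$, restoring the invariant for the next call. For $i\geq n$, the nested iterators are all exhausted and $Q$ empties, so $\nxt(\ell)$ outputs $\nil$.

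The main obstacle is the layer bookkeeping: one has to translate between the ``smallest remaining leaf of a team tree'' guarantee supplied by the outer hypothesis and the ``smallest remaining subordinate along a principal path'' characterisation of candidates in the layer above, mediated by $\upp$, $\down$ and $\Team$. Verifying that $\upp(a)$ and $\upp(b)$ really coincide with the nodes predicted by $\Can$, and that the sup/sub identities underpinning Lemma~\ref{lem:gamma1} mesh with these, is where the care is required; once those correspondences are pinned down, both the minimum extraction and the invariant restoration follow by routine checks.
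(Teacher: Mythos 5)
Your proposal follows essentially the same strategy as the paper's proof: induction on the call index with the invariant that $Q$ stores exactly $\Can(\{u_1,\dots,u_i\})$ at line~5, and identification of the $\deletemin$ output via Lemma~\ref{lem:gamma1}. You additionally make explicit two points the paper treats implicitly — the outer induction on $|\ell|$ needed to justify the recursive calls $\nxt(\Team(\cdot))$, and the argument that $\sup(u_{i+1})$ is the \emph{minimum}-keyed element of $Q$ rather than merely a member of $\Can(\{u_1,\dots,u_i\})$ — which tightens the argument without changing its structure.
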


\begin{proof}
We prove the lemma by induction on the number of times $\nxt(\ell)$ is called.
It is clear that in the first call to $\nxt(\ell)$, the algorithm returns the node $u_1$ which is the origin of the principal path that contains the root of $\ell$.
Consider the second call to $\nxt(\ell)$. If $\ell$ contains only one node $u_1$, then $\Team(u_1)$ does not exist and the priority queue $Q$ is empty at line 5. If $\ell$ contains more than one element, then $\Team(u_1)$ is defined.  At line 5, $Q$ will store the element $x=\upp(a)$, where $a=\nxt(\Team(u_1))$ is the node with the smallest value in $\Team(u_1)$. By definition $\Can(\{v_1\})=\{x\}$.

\smallskip

For the inductive step, suppose we are running $\nxt(\ell)$ the $(i+1)$th time, where $i\geq 1$.  We assume the following inductive assumption: \ When the algorithm reaches line 5,
\begin{enumerate}
\item[(I1)] if $\ell$ contains no more than $i$ elements, then the priority queue $Q$ is empty;
\item[(I2)] if $\ell$ contains at least $i+1$ elements, then the priority queue $Q$ contains exactly those nodes in $\Can(\{u_1,\ldots,u_i\})$.
\end{enumerate}
If $\ell$ contains no more than $i$ elements, then by (I1) the algorithm returns $\nil$ and $Q$ remains empty.
Now suppose $\ell$ contains at least $i+1$ elements. By (I2), when the algorithm reaches line 5, the priority queue $Q$ contains exactly those nodes in $\Can(\{u-1,\ldots,u_i\})$. Let $x$ be the least element in $Q$. By Lemma~\ref{lem:gamma1}, $x$ is the superordinate  $\sup(u_{i+1})$ of $u_{i+1}$. Thus the algorithm would locate and return the node $u_{i+1}$.

\smallskip

We then need to verify that the $\nxt(\ell)$ operation preserves the inductive invariants (I1) and (I2). It is clear that $(I1)$ holds at line 5 of the $(i+2)$th call to $\nxt(\ell)$.

To verify (I2), let $S$ and $S'$ denote the sets of nodes stored in the priority queue $Q$ at line 5 in the $(i+1)$th and the $(i+2)$th call to $\nxt(\ell)$, respectively.
Let $b$ be the leaf that has the next smallest value in $\Team(x)$ after $x$. After we finish the $(i+1)$th call to $\nxt(\ell)$, $Q$ would store the set $S\setminus\{x\}\cup \{\upp(b)\}$. In the $(i+2)$th  call to $\nxt(\ell)$, before reaching Line 5, the algorithm would add the node $\upp(a)$ to $Q$  where $a$ has the least value in $\Team(u_{i+1})$. Therefore we have $$S'=S\setminus \{x\}\cup \{\upp(a),\upp(b)\}= \Can(\{u_1,\ldots,u_i,u_{i+1}\}).$$ Hence (I2) is preserved.
\qed
\end{proof}
As described above, the $\psort(\ell,k)$ operation amounts to creating an iterator of $\ell$ and calling the $\nxt(\ell)$ operation $k$ times. By Lemma~\ref{lem:next0-correct}, the operation outputs the desired numbers $x_1,x_2,\ldots,x_k$ in increasing order.

\smallskip

{\bf Time complexity.} \ We now analyze the time complexity of the $\psort(\ell,k)$ operation.
Suppose $t$ is a layer-$i$ team in $\Gamma$.
 Any call to the $\nxt(t)$ operation may in turn trigger a sequence of calls to the $\nxt(t')$ operations on teams in lower layers. The algorithm maintains a priority queue for every team for which an iterator is created.

Each call to $\nxt(t)$ performs  a fixed number of priority queue operations (such as insert and $\deletemin$), at most two calls to the $\nxt(t')$ operation on some layer-$(i+1)$ team $t'$, and a fixed number of other elementary operations. Among these operations, the first call to $\nxt(t')$ occurs immediately after the $(i+1)$-iterator of $t'$ is created. This call to $\nxt(t')$ simply involves a pointer lookup and thus takes constant time. Furthermore, by Lemma~\ref{lem:fibonacci}, the number of leaves of the team tree of $t'$ is at most $\log_{\varphi}(m)$ where $m$ is the number of elements in $t$.

Suppose we perform $k$ calls to $\nxt(t)$ where $k\geq 1$.
Note that for any team $t'$ in layer $j>i$, the algorithm would make at most $k-1$ calls to $\nxt(t')$.
With every call to $\nxt(t')$, the number of elements stored in the priority queue increases by at most 2. Thus the number of elements stored in any priority queue  is at most than $2k$. Therefore, using a heap implementation of priority queues, the time for inserting an element to or deleting the minimum element from the priority queue takes $O(\log (k))$.

 Summing up the above costs over all $k$ calls, the operations perform $O(k)$ number of priority queue operations, $k-1$ calls to $\nxt$ on trees in a layer down, and other operations that take a total of $O(k)$ time. We use $\mu(k,m)$ to denote the time taken by $k$ calls to $\nxt(t)$ where the team tree of $t$ has $m$ leaves.
Assuming an efficient priority queue implementation, there is a constant $d>0$ such that.
\begin{equation}\label{eqn:T_i}
\mu(k,m)\leq \begin{cases}
    dk\log k+\mu(k-1, \log_\varphi (m)) & \text{if $m>1$;} \\
    d & \text{otherwise.}\end{cases}
\end{equation}

\begin{lemma}\label{lem:psort}  The $\psort(\ell,k)$ operation runs in time $O( \log^*_\varphi (n)k\log (k))$ where $n$ is the size of the list $\ell$.
\end{lemma}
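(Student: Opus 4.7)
The plan is to unroll the recurrence \eqref{eqn:T_i} for $\mu(k,m)$ and combine it with the depth bound from Lemma~\ref{lem:number of layers}. Since $\psort(\ell,k)$ amounts to creating an iterator of $\ell$ (layer-$0$ team) and making $k$ successive calls to $\nxt(\ell)$, its running time is precisely $\mu(k,n)$. Hence it suffices to bound $\mu(k,n)$.

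I would define the sequence $m_0 = n$ and $m_{i+1} = \log_\varphi(m_i)$ for $i \geq 0$. Let $L$ be the smallest index such that $m_L \leq 1$; by definition of the iterated logarithm we have $L \leq \log^*_\varphi(n)$. Now I iterate the recurrence: as long as $m_i > 1$ and $k - i \geq 1$, the first case of \eqref{eqn:T_i} gives
\[
\mu(k-i,\, m_i) \;\leq\; d(k-i)\log(k-i) \;+\; \mu(k-i-1,\, m_{i+1}) \;\leq\; dk\log k \;+\; \mu(k-i-1,\, m_{i+1}).
\]
Unrolling this $L$ times (or until $k - i$ drops below $1$, whichever comes first) and using the base case $\mu(\cdot,m) \leq d$ whenever $m \leq 1$, I obtain
\[
\mu(k,n) \;\leq\; L \cdot dk\log k + d \;\leq\; \log^*_\varphi(n) \cdot dk\log k + d.
\]

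Finally, to conclude, I would observe that this directly yields $\mu(k,n) = O\bigl(\log^*_\varphi(n)\, k \log k\bigr)$, which is the claimed bound.

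The only subtlety is really bookkeeping: making sure that the per-level cost $d(k-i)\log(k-i)$ is uniformly bounded by $dk\log k$ (which is immediate since $k-i \leq k$), and that the recursion bottoms out within $\log^*_\varphi(n)$ levels rather than continuing indefinitely. The first point is trivial, and the second follows immediately from Lemma~\ref{lem:number of layers}, so no real obstacle arises. The whole argument is a clean induction on the layer structure built by the LTT.
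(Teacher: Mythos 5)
Your proof is correct and takes essentially the same route as the paper: both reduce $\psort(\ell,k)$ to $\mu(k,n)$, unroll the recurrence~\eqref{eqn:T_i} over the layers, bound each per-layer term by $dk\log k$, and cap the number of terms by $\log^*_\varphi(n)$ via Lemma~\ref{lem:number of layers}. The only cosmetic difference is that you introduce the auxiliary sequence $m_i$ explicitly, whereas the paper writes out the resulting telescoped sum directly.
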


\begin{proof} The $\psort(\ell,k)$ operation makes $k$ calls to the $\nxt(\ell)$ operation. Therefore the running time of $\psort(\ell,k)$ is $\mu(k,n)$ where $n$ is the size of $\ell$. By (\ref{eqn:T_i}) we get
\[
\mu(k,n) \leq dk\log (k) + d(k-1)\log (k) + d(k-2)\log (k) +\cdots +d(k-s+1)\log (k) + d,
\]
where $n$ is the number of elements in $\ell$ and $s$ is layer number of $\Gamma$.
By Lemma~\ref{lem:number of layers}, $s\leq \log^*_\varphi(n)$.
Therefore the total time taken by $\psort(\ell,k)$ is $O(\log^*_\varphi(n)k\log (k))$.\qed
\end{proof}

\section{The Update Operations With LTT} \label{sec:update}
We describe the update operations assuming that all lists are represented by the LTT data structure.
Unless stated otherwise, all occurrences of $\link,\cut$ and $\changeval$ refer to the update operations defined in this section, but not to the operations with the same names in Section~\ref{sec:TT}. As explained in Section~\ref{sec:psort}, the arguments of the $\link$, $\cut$ and $\changeval$ operations consist of LTTs (representing lists) and leaves in the layer-0 tree of the corresponding LTTs (representing elements in the lists).


\smallskip

In the following we define the $\link(\ell,\ell')$, $\cut(\ell,x)$ and $\changeval(\ell,x,x')$ operations by induction on the maximum layer number in the argument LTTs $\ell,\ell'$.
If an LTT consists of only one layer, it contains only one node. Therefore the $\cut$ and $\changeval$ operations performed on such an LTT are trivial. To perform the $\link(\ell,\ell')$ operation where both $\ell,\ell'$ consist of one layer, we create a new node $v$ and set $\lef(v)$ and $\rig(v)$ as $\ell$ and $\ell'$ respectively in the layer-0 tree, and create a layer-1 tree with a single node whose value is the larger of the values of the nodes in $\ell$ and $\ell'$.
In subsequent sections we define the $\changeval(\ell,x,x')$, $\link(\ell,\ell')$ and $\cut(\ell,x)$ operations where $\ell$ and $\ell'$ have more than one layer. The inductive hypothesis assumes correct implementation of $\link$ and $\cut$ on LTTs with fewer layers.


\subsection{The $\expose(\ell,u)$ and $\changeval(\ell,u,x')$ Operation}\label{subsec:expose}

The $\changeval(\ell,x,x')$ operation assumes that $x$ is a leaf in the layer-0 tree of the LTT representing $\ell$ and changes its value to $x'$.
Note that after changing the value of $x$ to $x'$, the LTT structure may be broken. Thus we should apply other procedures to preserve the LTT. This is achieved using an $\expose(\ell,u)$ operation where $u=p(x)$.

Intuitively, the $\expose(\ell,u)$ operation is a ``fix up'' operation that maintains the LTT structure on the path from $u$ to the root of the tree, once a change has occurred on a child.
It walks the path from $u$ to the root, and performs the following procedures in each step: \ It first separates $u$ from its principal path from below, so that both its left child $\lef(u)$ and right child $\rig(u)$ are detached from the principal path of $u$. It then  links the smaller of $\lef(u)$ and $\rig(u)$ with the principal path of $u$ and sets $\val(u)$ as the smaller value of its children. Finally, it repeats the same process to set $p(u)$ as the new $u$.

To separate and link the principal paths mentioned above, we use the $\cut$ and $\link$ operations on the team trees of the corresponding principal paths. Note that in the above operation, we may change the subordinate of $u$. This requires us to change the value of $\down(u)$ in the team tree $\Team(u)$, which can be performed by calling $\changeval(\Team(u), \down(u), \max\{\lef(u),\rig(u)\})$ recursively. Note that the team trees used as arguments of the $\cut,\link$ operations and the recursive call to $\changeval$ have strictly fewer layers than $\ell$. Thus, by the inductive hypothesis, these operations have been defined. For an exact description, see Algorithm~\ref{alg:expose}.

\begin{algorithm}[H]
\caption{$\expose(\ell,u)$ }\label{alg:expose}
\begin{algorithmic}[1]
\State $x\gets u$
\While{$x\neq \nil$ }
    \State $(z,z') \gets (\lef(x),\rig(x))$ if $\val(\lef(x)) < \val(\rig(x))$; otherwise $(z,z')\gets (\rig(x),\lef(x))$ \label{line:min}
    \State $\val(x)\gets \val(z)$ \label{line:change}
    \State $T_1,T_2 \gets$ \Call{$\cut$}{$\Team(x),\down(x)$} \label{line:splt}
    \State $T_1 \gets$ \Call{$\link$}{$T_1, \Team(z)$}\label{line:join}
	\State \Call{$\changeval$}{$T_1, \down(x),\val(z')$} \Comment{Change the value of $\down(x)$ in the layer below}\label{line:correct}
	\State $x\gets p(x)$
\EndWhile
\end{algorithmic}
\end{algorithm}

We now analyze the correctness of the $\expose(\ell,u)$ operation. More specifically, let $v$ be an internal node in the LTT data structure. We use the following invariants:
\begin{enumerate}
\item[(J1)] $\val(v) = \min\{\val(\lef(v)),\val(\rig(v))\}$
\item[(J2)] $\val(\down(v)) = \val(\sub(v))$
\item[(J3)] If $v$ has a child $v'$ that is an internal node and $\val(v)=\val(v')$, then $\down(v),\down(v')$ belong to the same team tree $\Team(v)$ and $\down(v)$ is to the left of $\down(v')$ in $\Team(v)$.
\end{enumerate}
Intuitively, the three invariants state that the LTT structure is maintained.  Indeed,  (J1) states that the value of $v$ is assigned according to the tournament tree property, (J2) states that $\down(v)$ has the correct value, and  (J3) states that the team tree of $\down(v)$ is correctly maintained.
%
\begin{definition}
Let $v$ be a node in the LTT data structure of $\ell$. The {\em parent-down closure} of $v$ is the minimal set $\pd(v)$ of nodes in the LTT that contains $v$ and for any node $w\in \pd(v)$,
\begin{enumerate}
\item $p(w)\in \pd(v)$ if $w$ is not the root of a tree; and
\item $\down(w)\in \pd(v)$ if $w$ is not a leaf in a tree.
\end{enumerate}
\end{definition}
Note that the $\expose(\ell,u)$ operation may only update the values, as well as split and join team trees, for nodes in the set $\pd(u)$. Hence intuitively, $\pd(u)$ denotes the ``region of operation'' in the LTT $\ell$ of $\expose(\ell,u)$. For the next lemma, recall that we assume by the inductive hypothesis that a correct implementation of $\link$ and $\cut$ can be called on LTTs with fewer layers than $\ell$.
\begin{lemma} \label{lem:changeval}
After running $\expose(\ell,u)$, (J1)--(J3) hold for every node $v\in \pd(u)$.
\end{lemma}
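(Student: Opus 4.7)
The plan is to prove the lemma by induction on the layer number of the LTT $\Gamma$ representing $\ell$. The base case is vacuous: if $\Gamma$ has a single layer then $\ell$ has one element and the while loop of $\expose$ never executes. For the inductive step, I appeal to the stated inductive hypothesis that $\link$, $\cut$, and $\changeval$ act correctly on LTTs with strictly fewer layers and therefore preserve (J1)--(J3) on those arguments. This way the recursive invocations at lines~\ref{line:splt}--\ref{line:correct}, which all operate on $\Team(x)$, leave the lower-layer LTT consistent.

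The main body is a second induction on the iterations of the while loop. Let $x_1=u,\; x_2=p(u),\;\ldots,\; x_m$ be the ancestor chain of $u$ in the layer-$0$ tree. I would formulate the loop invariant: immediately before processing $x_j$, every node of $\pd(u)$ that lies strictly below $x_j$ in the layer-$0$ tree---together with every node in the team trees associated with the principal paths that were merged while processing $x_1,\ldots,x_{j-1}$---already satisfies (J1)--(J3). The initial state satisfies this invariant because the preceding leaf-level assignment $\val(x)\gets x'$ by $\changeval$ only corrupts the ancestor chain of $x$, leaving nodes strictly below $u=p(x)$ untouched. The final state, after $x_m$ is processed and $p(x_m)=\nil$ terminates the loop, yields (J1)--(J3) for all of $\pd(u)$, since every node of $\pd(u)$ is either on the chain $x_1,\ldots,x_m$ or inside a team tree reached from such a chain node via $\down$.

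For the iteration acting on $x$, line~\ref{line:min} picks the smaller child $z$ and the larger child $z'$, and line~\ref{line:change} sets $\val(x)\gets\val(z)$, which immediately re-establishes (J1) at $x$. The remainder of the iteration must rearrange team trees so that $x$ joins $z$'s principal path with $z'$ as its new subordinate. Line~\ref{line:splt} cuts $\Team(x)$ at $\down(x)$, separating it into $T_1$ (the subordinates of $x$ and its ancestors on the old principal path) and $T_2$ (the subordinates of $x$'s descendants on that path). Line~\ref{line:join} links $T_1$ with $\Team(z)$ so that $T_1$ becomes the team tree of the new principal path through $x$. Line~\ref{line:correct} then calls $\changeval$ recursively on $T_1$ to overwrite $\val(\down(x))$ with $\val(z')$. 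By the outer inductive hypothesis, each of these subcalls preserves the LTT invariants in the layer below, so at the end of the iteration (J2) and (J3) hold at $x$ and at $\down(x)$ and the loop invariant advances from $j$ to $j+1$.

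The hardest step will be verifying (J3) through the cut-and-link surgery at lines~\ref{line:splt} and~\ref{line:join}. This demands locating $\down(x)$ precisely inside $\Team(x)$ via Definition~\ref{def:team}, confirming that after the cut $T_1$ contains exactly the leaves whose values are $\val(\sub(y))$ for $y$ ranging over $x$ and its ancestors on the old principal path, and checking that linking $T_1$ with $\Team(z)$ appends $z$'s old team in the correct left-to-right order required by Definition~\ref{def:team}. A delicate boundary case arises when the principal-path child of $x$ does not actually change (so $z$ was already on $x$'s principal path before the update and $\Team(x)$ and $\Team(z)$ refer to the same object); one must check that the cut and link cancel consistently in this situation so that the team tree is not corrupted. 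Once these cases are dispatched, the loop invariant is preserved at every iteration and the lemma follows.
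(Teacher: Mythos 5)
Your proof takes the same outer approach as the paper — induction on the layer number of the LTT, with a vacuous single-layer base case and reliance on the inductive hypothesis that $\link$, $\cut$ and $\changeval$ on lower-layer team trees already preserve (J1)--(J3) — but it organizes the inductive step differently. The paper runs a \emph{secondary induction on the layer index $i$} of nodes in $\pd(u)$: it first argues directly that each layer-$0$ node $v\in\pd(u)$ is processed as $x$ in some iteration of the $\mathsf{while}$-loop, at which point lines~\ref{line:change}, \ref{line:join}, and \ref{line:correct} establish (J1), (J3), and (J2) respectively, and it then covers layer-$(i{+}1)$ nodes by locating the recursive $\changeval(T_1,w,\val(z'))$ call that must visit them. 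Your loop invariant over the iterations of the $\mathsf{while}$-loop covers the same ground and is arguably a cleaner formalization: instead of peering into the recursive call's own while-loop, you simply appeal to the outer inductive hypothesis (correctness of $\changeval$, $\cut$, $\link$ on LTTs with fewer layers) for the team-tree nodes, and track coverage of $\pd(u)$ iteration by iteration. The boundary case you flag — where $z$ was already on $x$'s principal path so that $\Team(x)$ and $\Team(z)$ name the same underlying tree before the cut — is a genuine subtlety that the paper's proof also leaves unaddressed; the surgery does compose consistently (after the cut at $\down(x)$, the fragment $T_2$ is what $\Team(z)$ now denotes, so the link reassembles the original team tree and the subsequent $\changeval$ merely updates $\val(\down(x))$), but a fully rigorous proof should make this explicit.
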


\begin{proof}
The proof proceeds by induction on the number of layers in $\ell$. The statement is clear for $\ell$ with a single layer (which consists of only one node). Now suppose $\ell$ contains $m$ layers where $m>1$. Take a node $v\in \pd(u)$ that is in layer-0 of the LTT $\ell$.
Then $v$ is set as $x$ by some iteration of the $\mathsf{while}$-loop. During this iteration, (J1) holds after running Line~\ref{line:change}, (J2) holds after running Line~\ref{line:correct} and (J3) holds after running Line~\ref{line:join} for the node $v$.

Suppose that (J1)--(J3) hold for all nodes in $\pd(u)$ on some layer-$i$ and $v\in \pd(u)$ is an internal node in a layer-$(i+1)$ tree of the LTT data structure.
Then by definition of $\pd(u)$, there is some leaf $w$ in the subtree rooted at $v$ such that $w=\down(w')$ for some $w'\in \pd(u)$. Let $w$ be the rightmost  leaf with this property. The algorithm must have made a call $\changeval(T_1, w,\val(z'))$ during its execution. In this call to $\changeval$, the $\mathsf{while}$-loop visits $v$ and make (J1)--(J3) hold for $v$ using Line~\ref{line:change}, Line~\ref{line:correct} and Line~\ref{line:join} respectively. \qed
\end{proof}
%

\subsection{The $\link(\ell,\ell')$ and $\cut(\ell,x)$ Operations} \label{subsec:multiupdate}

The $\link(\ell,\ell')$ operation is performed similarly to linking two balanced binary search trees. The operation compares the layer-0 trees of $\ell$ and $\ell'$ and links the tree with a smaller height as a subtree of the other.

Before we describe the $\link(\ell,\ell')$ operation, we describe the tree rotation operation for LTTs, which is an important subroutine.  Here, we describe the left rotation $\rotateleft(\ell, u)$, where $u$ is a right child in an LTT $\ell$; the right rotation operation is symmetric. To perform $\rotateleft(\ell, u)$, we first separate both $u$ and its parent $p(u)$ from the rest of their principal paths from above and below. We then perform the left rotation on $u$ as if for a normal binary tree. Lastly, we restore the principal paths of $p(u)$ by calling the $\expose(\ell,p(u))$ operation. This will fix the principal paths we separated in this operation and preserve the structure of the LTT. See Algorithm~\ref{alg:multirotateleft}.

\begin{algorithm}[H]
\caption{$\rotateleft(\ell,u)$}\label{alg:multirotateleft}
\begin{algorithmic}[1]
\State $y \gets p(u)$;
\If{$y$ is not the root}
\State \Call{$\cut$}{$\Team(p(y)), \down(p(y))$} \Comment{Separate $y$  from above}
\EndIf
\State \Call{$\cut$}{$\Team(y),\down(y)$} \Comment{Separate $y$ from below}
\State \Call{$\cut$}{$\Team(u),\down(u)$} \Comment{Separate $u$ from below}
\State $\rig(y)\gets \lef(u)$; $\lef(u)\gets y$ \Comment{Perform the left rotation on $u$}
\State \Call{$\expose$}{$\ell,y$}
\end{algorithmic}
\end{algorithm}
The following lemma is implied from Lemma~\ref{lem:changeval} and the proof is straightforward.

\begin{lemma} \label{lem:multilink}Let $y$ be the parent of $u$.
After running $\rotateleft(\ell,u)$, (J1)--(J3) hold for every node $v\in \pd(y)$.
\end{lemma}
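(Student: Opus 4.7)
The plan is to derive Lemma~\ref{lem:multilink} as a direct consequence of Lemma~\ref{lem:changeval} applied to the final call $\expose(\ell, y)$ in Algorithm~\ref{alg:multirotateleft}. The role of the three $\cut$ operations on lines 3, 5, 6 is to detach the rotation region from the rest of the LTT so that, after the pointer rewiring on line 7, the only nodes whose invariants may be broken are those on the upward path from $y$ together with their parent-down closures. Once this isolation is established, invoking Lemma~\ref{lem:changeval} immediately gives what we want.

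First I would verify that after line 6 (but before the rotation) every node outside the set $\{y,u\}$ and the principal paths passing through them still satisfies (J1)--(J3). This uses the inductive hypothesis that $\cut$ on LTTs with strictly fewer layers correctly preserves the LTT invariants. In particular, cutting $\Team(p(y))$ at $\down(p(y))$ separates $y$ from the principal path that extended upward through $p(y)$, while the cuts at $\Team(y)$ and $\Team(u)$ leave $y$ and $u$ as the tops of their own truncated principal paths. Consequently the layer-$0$ portion of $\ell$ is a valid binary tree decorated with correct values everywhere except at $y$ and $u$, and each team tree in the lower layers is itself a valid LTT.

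Next I would observe that after the pointer rewiring on line 7, the parent of $y$ becomes $u$, and $u$ takes the position formerly occupied by $y$. Hence the upward chain from $y$ in the rotated tree is precisely $y, u, p(y), p(p(y)), \ldots$, which is exactly the path along which $\expose(\ell, y)$ walks. The parent-down closure $\pd(y)$, computed in the rotated LTT, therefore contains $y$, $u$, all original ancestors of $p(y)$, and all nodes pulled in by recursive $\changeval$ calls inside $\expose$. Applying Lemma~\ref{lem:changeval} to this invocation of $\expose(\ell, y)$ then yields that (J1)--(J3) hold for every node $v \in \pd(y)$, which is the desired conclusion.

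The main obstacle is purely bookkeeping: one must carefully track how the three preliminary cuts restructure the team trees in the lower layers, so that after the rotation the LTT is ``clean'' everywhere outside $\pd(y)$, thereby matching the precondition implicit in Lemma~\ref{lem:changeval}. Given the symmetry with the right-rotation case and the analogous argument already carried out for $\expose$, this verification is routine, as the author notes.
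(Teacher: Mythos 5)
Your proposal is correct and follows exactly the route the paper takes: the paper's entire argument for this lemma is the one-sentence remark that it ``is implied from Lemma~\ref{lem:changeval} and the proof is straightforward,'' the point being that the last step of $\rotateleft(\ell,u)$ is the call $\expose(\ell,y)$, so Lemma~\ref{lem:changeval} applied to that call directly gives (J1)--(J3) for all $v\in\pd(y)$. Your additional bookkeeping about what the three preliminary $\cut$ calls accomplish and why the post-rotation upward chain from $y$ is exactly the path $\expose(\ell,y)$ walks is the (unwritten) content the paper is implicitly relying on, so you have if anything spelled out more than the paper does.
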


We now describe the $\link(\ell,\ell')$ operation. For simplicity in this section we only describe the case when the layer-0 tree of $\ell$ has a greater or equal height than the layer-0 tree of $\ell'$; the other case is symmetric. We first find a node $u$ on the rightmost path in the layer-0 tree of $\ell$ such that $T(u)$ has the same height as the layer-0 tree $T'$ of $\ell'$. We then create a new node $v$, making it a child of $p(u)$ if $u$ is not the root, and set $T(u)$ as $v$'s left subtree and $T'$ as $v$'s right subtree.  We then fix the principal paths by calling $\expose$ on $v$. This operation may leave the resulting layer-0 tree unbalanced. Hence we walk the path from $v$ to the root and find a node $y$ on this path such that the subtree $T(p(y))$ is unbalanced, and we call $\rotateleft$ on $y$. See Algorithm~\ref{alg:multilink}.
This finishes the description of the $\link(\ell,\ell')$ operations. Note that inside this operation,  all recursive subroutine calls to $\link$ and $\cut$ are made on argument LTTs with fewer layers than $\ell$, and are thus defined by the inductive hypothesis.




\begin{algorithm}[H]
\caption{$\link(\ell,\ell')$ }\label{alg:multilink}
\begin{algorithmic}[1]
\State $T,T' \gets$ the layer-$0$ tournament trees of $\ell,\ell'$ respectively
\State $r_1,r_2 \gets$ the roots of $T,T'$ respectively\
\State Follow $\rig$ pointers from $r_1$ to find $u$ such that $T(u)$ and $T'$ have the same height
\State Create a new node $v$ and the corresponding node $\down(v)$ in the layer below
\State $p(v)\gets p(u)$
\State $\lef(v)\gets u$; $\rig(v)\gets r_2$
\State \Call{$\expose$}{$\ell, v$} \label{line:link-expose}
\State Following $p$ pointers from $v$ until we reach $y$ such that $T(p(y))$ is unbalanced
\State If such $y$ exists, \Call{$\rotateleft$}{$\ell,y$} \label{line:link-rotate}
\end{algorithmic}
\end{algorithm}

We perform the $\cut(\ell,u)$ operation in a similar way as Alg.~\ref{alg:tournament-cut} in Section~\ref{sec:TT}.  The operation first calls $\changeval$ on $u$ to assign it a value smaller than all numbers in $\ell$ (we call it $-\infty$ for convenience). In this way, all nodes on the path from $u$ to the root form a principal path. The operation then walks the path from $u$ to the root,  joining all subtrees to its left into a new tree and all  subtrees to its right into another new tree. Finally it restores the value of $u$ and joins $u$ to the first new tree. We perform all the joining of trees using the $\link$ operation; see Alg.~\ref{alg:multicut}.


\begin{algorithm}[H]
\caption{$\mathsf{\cut}(\ell,u)$ }\label{alg:multicut}
\begin{algorithmic}[1]
\State $a\gets \val(u)$; \Call{$\changeval$}{$\ell,u,-\infty$}
\State $x \gets p(u)$; $y \gets u$
\State Create two empty tournament trees $T_1, T_2$
\While{$x \neq \nil$}
\If{$y = \lef(x)$}
\State $T_2 \gets$ \Call{$\link$}{$T_2, T(\rig(x))$}
\Else
\State $T_1 \gets$ \Call{$\link$}{$T(\lef(x)), T_1$}
\EndIf
\State $y \gets x$; $x \gets p(x)$
\EndWhile
\State $\val(u)\gets a$; \Call{$\link$}{$T_1,u$} \Comment{Link $T_1$ with the restored $u$}
\end{algorithmic}
\end{algorithm}

\subsection{Time Complexity of the Update Operations}

We now analyze the time complexity of the update operations.  For any list $\ell$ with $n$ elements, we define $s_i(n)$ as the maximum number of elements of a layer-$i$ team in the LTT of $\ell$.
It is clear that $s_0(n)=n$. By Lemma~\ref{lem:number of layers}, for all $n>0$ we have
\begin{align}
 &s_{\log^*_\varphi(n)}(n)=1, \text{ and }  \notag\\
 &\forall i\geq 0:\ s_{i+1}(n)\leq \log_\varphi(s_i(n)) \label{eqn:slog}
\end{align}
For convenience, we set $s_i(n)=1$ for all $i>\log^*_\varphi(n)$.
We will express the complexity of the update operations using the variables $s_i(n)$. 

\begin{lemma}\label{lem:log product}
For any $i\geq 0$, there is a constant $n_0>0$ such that for all $n>n_0$ we have
\[
    \prod_{j\geq i+1} s_j(n) \leq s_i(n)
\]
\end{lemma}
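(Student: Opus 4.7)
The plan is to reduce the claim to a purely analytic inequality about iterated logarithms applied to $a = s_i(n)$, and then verify it by taking logs. Writing $f^{(k)}$ as shorthand for the $k$-fold iterate $\underbrace{\log_\varphi\cdots\log_\varphi}_{k}$, I would first iterate the recurrence (\ref{eqn:slog}) starting at level $i$ to obtain $s_{i+k}(n)\leq f^{(k)}(s_i(n))$ for all $k\geq 1$, hence
\[
\prod_{j\geq i+1} s_j(n) \;\leq\; \prod_{k\geq 1} f^{(k)}(a).
\]
The right-hand product is effectively finite, since $f^{(k)}(a)\leq 1$ once $k>\log_\varphi^*(a)$. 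It therefore suffices to show that $\prod_{k\geq 1} f^{(k)}(a)\leq a$ for all $a$ past some absolute constant $a_0$, and then to pick $n_0$ so that $s_i(n)\geq a_0$ for every $n>n_0$.

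To prove the analytic inequality I would take natural logarithms. The identity $\ln y=(\ln\varphi)\log_\varphi y=(\ln\varphi)\,f(y)$ gives $\ln f^{(k)}(a)=(\ln\varphi)\,f^{(k+1)}(a)$, and so the desired bound $\leq\ln a = (\ln\varphi)\,f(a)$ reduces to $\sum_{k\geq 2} f^{(k)}(a)\leq f(a)$. The main obstacle is a sharp enough estimate on this tail sum, which I plan to attack by combining two facts. First, the sum has at most $\log_\varphi^*(a)$ non-trivial terms, and Lemma~\ref{lem:iterated log} applied with $i=2$ yields $\log_\varphi^*(a)\leq f^{(2)}(a)$ for $a$ past some constant. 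Second, every term in the sum is bounded by $f^{(2)}(a)$ since $(f^{(k)}(a))_k$ is decreasing in $k$. Combining,
\[
\sum_{k\geq 2} f^{(k)}(a) \;\leq\; \log_\varphi^*(a)\cdot f^{(2)}(a) \;\leq\; \bigl(f^{(2)}(a)\bigr)^{2} \;=\; (\log_\varphi\log_\varphi a)^{2},
\]
and a routine asymptotic calculation gives $(\log_\varphi\log_\varphi a)^{2}\leq \log_\varphi a = f(a)$ for $a$ beyond some absolute constant $a_0$.

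To close the argument I would pick $n_0$ so large that $s_i(n)\geq a_0$ whenever $n>n_0$; this is immediate for $i=0$ (since $s_0(n)=n$), and for $i\geq 1$ it follows from the fact that the worst-case layer-$i$ team size is unbounded in $n$ (while also dominated by $f^{(i)}(n)$ via Lemma~\ref{lem:number of layers}). With such $n_0$ the chain of inequalities above yields $\prod_{j\geq i+1} s_j(n)\leq s_i(n)$, as required.
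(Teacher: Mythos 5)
Your proof is correct, and it takes a genuinely different route from the paper's. The paper argues by a finite downward induction on $i$: the base case is the trivial regime $i\geq\log_\varphi^*(n)-1$, where the product collapses to $1$, and the inductive step multiplies the inductive bound by $s_i(n)$ to get
\[
\prod_{j\geq i} s_j(n) = s_i(n)\cdot\prod_{j\geq i+1} s_j(n)\ \leq\ s_i^2(n)\ \leq\ \log_\varphi^2\bigl(s_{i-1}(n)\bigr)\ \leq\ s_{i-1}(n),
\]
using only the recurrence~(\ref{eqn:slog}) and the elementary fact $\log_\varphi^2(x)\leq x$ for $x$ large. You instead extract an $n$-free analytic inequality $\prod_{k\geq 1}f^{(k)}(a)\leq a$ about iterated logarithms and prove it by log-linearizing: the product becomes the tail sum $\sum_{k\geq 2}f^{(k)}(a)\leq f(a)$, which you control by counting terms (at most $\log_\varphi^*(a)$, each at most $f^{(2)}(a)$) and invoking Lemma~\ref{lem:iterated log} to bound $\log_\varphi^*(a)$ by $f^{(2)}(a)$ --- a lemma the paper's proof never touches --- before finishing with the elementary estimate $(\log_\varphi\log_\varphi a)^2\leq\log_\varphi a$. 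Your route is arguably cleaner: it sidesteps the paper's awkward $n$-dependent base case, isolates the purely analytic content in a reusable form, and makes the dependence of $n_0$ on $i$ transparent (one only needs $s_i(n)$ past one absolute constant $a_0$). Both proofs, however, silently rely on the same unproved claim that $s_i(n)$ eventually exceeds the relevant threshold for large $n$; you at least state it (``the worst-case layer-$i$ team size is unbounded in $n$''), whereas the paper buries it in the choice of $n'$, but neither argument actually establishes it.
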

\begin{proof}
As $s_j(n)=1$ for all $n>0$ and $j\geq \log_\varphi^*(n)$, the statement is clear for $i\geq \log_\varphi^*(n)-1$.
The proof proceeds by induction on $i$. Fix $0<i<\log_\varphi^*(n)$ and suppose there is $n_0$ such that the statement holds for all $n>n_0$.
Then for all $n\geq n_0$ we have
\begin{align*}
\prod_{j\geq i} s_j(n) &= s_i(n)\cdot \prod_{j\geq i} s_j(n) &\\
                   &\leq  s^2_i(n) & \text{(by the ind. hyp.)}\\
                   &\leq \log^2_\varphi(s_{i-1}(n)) & \text{(by (\ref{eqn:slog}))}\\
\end{align*}
Take $n'$ such that
\[
   \log^2_\varphi(s_{i-1}(n')) \leq s_{i-1}(n').
\]
Then for all $n\geq \max\{n',n_0\}$
\[
    \prod_{j\geq i} s_j(n) \leq \log^2_\varphi(s_{i-1}(n)) \leq s_i(n).
\]
\qed\end{proof}
Recall that the $\expose(\ell,u)$ operation performs a number of iterations. We analyze the running time of each iteration separately.
Without loss of generality, we assume in the next lemma that the list $\ell$ contains no fewer elements than $\ell'$.
\begin{lemma}\label{lem:update} Let $n$ be the number of elements in the list $\ell$. The following hold for the update operations:
\begin{enumerate}
\item[(a)] Each iteration of $\expose(\ell,u)$ runs in time $O\left(s_2^2(n)\right)$.
\item[(b)] The $\expose(\ell,u)$ and $\changeval(\ell,u,x')$ operations run in time $O\left(s_1(n)s_2^2(n)\right)$.
\item[(c)] The $\join(\ell,\ell')$ operation runs in time $O\left(d(\ell,\ell')\cdot s_2^2(n)\right)$ where $d(\ell,\ell')$ is the height difference between the layer-0 trees of $\ell$ and $\ell'$.
\item[(d)] The $\cut(\ell,u)$ operation runs in time $O\left(s_1(n)s_2^2(n)\right)$.
\end{enumerate}
\end{lemma}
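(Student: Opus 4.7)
The plan is to prove (a)--(d) simultaneously by induction on the layer number of the argument LTT. The base case is an LTT with a single layer (a single node), for which $\changeval$ is a trivial relabelling, $\cut$ is a no-op, $\link$ takes constant time (create a new root and a layer-$1$ node), and $\expose$ has nothing to do, so all four bounds trivially hold. For the inductive step I fix $\ell$ with $n$ elements and grant (a)--(d) on every team-tree sub-LTT of $\ell$; by construction, each such sub-LTT has at most $s_1(n)$ elements and strictly fewer layers than $\ell$. The pivotal numerical fact I will use repeatedly is that a layer-$j$ team inside a layer-$1$ team tree of $\ell$ is a layer-$(j+1)$ team of $\ell$, so $s_j(s_1(n))\le s_{j+1}(n)$ for all $j\ge 1$; this converts every hypothesis-given bound into a bound in the $s_i(n)$-parameters of the outer instance.

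I would attack (a) first. A single pass of the while-loop in Algorithm~\ref{alg:expose} performs one $\cut$ on $\Team(x)$ (Line~\ref{line:splt}), one $\link$ on layer-$1$ team trees (Line~\ref{line:join}), and one recursive $\changeval$ on the resulting tree (Line~\ref{line:correct}), plus $O(1)$ elementary work. Applying (d) and (b) of the inductive hypothesis bounds the $\cut$ and the $\changeval$ by $O(s_1(s_1(n))\cdot s_2^2(s_1(n))) = O(s_2(n)\cdot s_3^2(n))$, and applying (c) bounds the $\link$ by $O(d\cdot s_2^2(s_1(n)))=O(d\cdot s_3^2(n))$, where $d$ is at most the height of a layer-$1$ tree and hence $O(\log_\varphi s_1(n))$, which is $O(s_2(n))$ in the worst-case analysis of Lemma~\ref{lem:number of layers} and (\ref{eqn:slog}). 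Summing gives $O(s_2(n)\cdot s_3^2(n))$ per iteration, and I then collapse this to $O(s_2^2(n))$ using $s_3(n)\le \log_\varphi s_2(n)$ together with $\log_\varphi^2 x\le x$ for $x$ large (equivalently, a special case of Lemma~\ref{lem:log product}), establishing (a). Part (b) then drops out because the while-loop of $\expose(\ell,u)$ climbs the layer-$0$ path from $u$ to the root, whose length is $O(\log_\varphi n)=O(s_1(n))$, and $\changeval$ is a value assignment followed by $\expose$.

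For (c) I would trace $\link(\ell,\ell')$ through Algorithm~\ref{alg:multilink}: the initial descent along $\rig$-pointers visits $O(d(\ell,\ell'))$ nodes, the call $\expose(\ell,v)$ from the newly inserted $v$ climbs back through $O(d(\ell,\ell'))$ iterations since $v$ sits at that depth, and the subsequent search for $y$ followed by at most one $\rotateleft(\ell,y)$ is bounded the same way because $\rotateleft$ consists of a constant number of layer-$1$ cuts and a second $\expose$ starting from $y$, whose depth is again $O(d(\ell,\ell'))$. Multiplying by the per-iteration bound from (a) yields $O(d(\ell,\ell')\cdot s_2^2(n))$. For (d) I would mirror the proof of Theorem~\ref{thm:tournament}: the opening $\changeval(\ell,u,-\infty)$ costs $O(s_1(n)s_2^2(n))$ by (b); the sequence of $\link$s along the $u$-to-root path operates on trees $\widehat T_1,\ldots,\widehat T_k$ with $h(\widehat T_i)\le 2i-1$, so the successive height differences telescope and the cumulative link cost is $O(k\cdot s_2^2(n))$ with $k\le h(T)=O(s_1(n))$, matching the claim.

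The chief technical obstacle is the arithmetic on the $s_i$: convincingly passing between $s_j(s_1(n))$ and $s_{j+1}(n)$, between $\log_\varphi s_1(n)$ and $s_2(n)$, and between $s_3^2(n)$ and $s_2(n)$, all of which collapse cleanly only after invoking Lemma~\ref{lem:number of layers}, equation~(\ref{eqn:slog}) and Lemma~\ref{lem:log product}. Once these order-of-magnitude comparisons are in place, the rest is a direct mutual induction that traces the algorithms line by line.
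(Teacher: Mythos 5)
Your proposal is correct and uses the same overall framework as the paper: a mutual induction on the number of layers of the LTT, establishing (a)--(d) in that order at each layer count and appealing to Lemma~\ref{lem:number of layers}, equation~(\ref{eqn:slog}), and Lemma~\ref{lem:log product} for the $s_i$-arithmetic. Parts (b), (c) and (d) track the paper's argument essentially verbatim (path length $O(s_1(n))$ for (b); counting $\expose$ iterations at depth $O(d(\ell,\ell'))$ plus the single rotation for (c); and the $h(\widehat T_i)\le 2i-1$ telescoping of link costs for (d)).

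The one place you genuinely diverge is part (a). The paper does \emph{not} invoke the inductive hypothesis (b) on the recursive $\changeval$ call at Line~\ref{line:correct}; instead, it introduces the per-layer quantity $\Time(n,i)$, derives the recurrence $\Time(i)\le c_1 s_{i+2}s_{i+3}^2 + s_{i+2}\Time(i+1) + c_2$ (using the layer-number induction only for the $\cut$ and $\link$ subcalls), and then telescopes through all $\log^*_\varphi(n)$ layers before collapsing $O(\log^*_\varphi(n)\,s_2 s_3^3)$ to $O(s_2^2)$. You shortcut this by applying (b) of the strong inductive hypothesis directly to the recursive $\changeval(T_1,\cdot,\cdot)$, which is legitimate because $T_1$ is a layer-$1$ team tree and hence an LTT with strictly fewer layers; combined with (d) for the $\cut$ and (c) for the $\link$ you get $O(s_2(n)\,s_3^2(n))$ per iteration in one step, then collapse via $s_3^2\le s_2$. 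This is a cleaner decomposition, avoids the explicit recurrence and telescoping sum entirely, and even yields a slightly sharper intermediate bound ($s_2 s_3^2$ versus the paper's $s_2 s_3^4$), though of course both land at $O(s_2^2(n))$. The trade-off is that your version leans more heavily on the translation $s_j(s_1(n))\le s_{j+1}(n)$ and on monotonicity of the $s_i$, points the paper sidesteps by working with the $\Time(n,i)$ parameters directly in terms of $n$; if you write this up, make that translation explicit so it does not look like circular use of the hypothesis.
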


\begin{proof}
We prove the lemma by induction on the layer number of $\ell$. The statements are clear if $\ell$ consists of a single layer. For the case when $\ell$ has more than one layer, we prove each statement as follows:

\begin{enumerate}
\item[(a)]  We use $\Time(n,0)$ to denote the maximal running time of each iteration of $\expose(\ell,u)$. It is clear that the number of iterations is bounded by the length of the path from $u$ to the root, which is at most $s_1(n)$. Hence the total running time of $\expose(\ell,u)$ is $s_1(n)\Time(n,0)$.

    Note also that each iteration of $\expose(\ell,u)$ may make a recursive call to $\expose$ on a team in the layer below, and this recursive call may trigger further recursive calls to $\expose$ on lower layers of the LTT. Thus for $0\leq i\leq \log^*_\varphi(n)$ and any layer-$i$ team $t$,
     we define $\Time(n,i)$ as the maximal running time of an iteration in a recursive call $\expose(t,v)$ that is made within $\expose(\ell,u)$. Since the recursive call $\expose(t,v)$ consists of at most $s_{i+1}(n)$ iterations, the total running time of $\expose(t,v)$ is at most $s_{i+1}(n)\Time(n,i)$.

    To prove (a), we prove by induction on $i$ that $\Time(n,i)$ is $O(s_{i+2}^2(n))$ for all $0\leq i\leq \log_\varphi^*(n)$.

    It is clear that $\Time\left(n,\log^*_\varphi(n)\right)=1$.
    Now suppose $t$ is a layer-$i$ team where $i<\log^*_\varphi(n)$.  Each iteration in a recursive call $\expose(t,v)$ makes one call to $\cut$ and one call to $\link$. Both of these subroutine calls are made on teams in the next layer down, which by the inductive hypothesis takes $O(s_{i+2}(n)s^2_{i+3}(n))$. The iteration also recursively calls $\expose$ on a team in the next layer down. By the above argument this takes $s_{i+2}(n)\Time(n,i+1)$. Lastly the iteration also performs a fixed number of other elementary operations.
    Therefore we obtain the following expression for $0\leq i<\log^*_\varphi(n)$:
    \[
        \Time(n,i) \leq c_1 s_{i+2}(n)s^2_{i+3}(n)+s_{i+2}(n)\Time(n,i+1) + c_2
    \]
    where $c_1,c_2>0$ are constants.
    For convenience we drop the parameter $n$ in the above expression to get
    \begin{equation}\label{eqn:Time expose}
        \Time(i) \leq c_1s_{i+2}s^2_{i+3}+s_{i+2}\Time(i+1) + c_2
    \end{equation}
    Applying telescoping on (\ref{eqn:Time expose}), we obtain
    \begin{align*}
        \Time(0)    \leq    \ & c_1s_2s_3^2+c_1s_2s_3s^2_4+\cdots+c_1s_2\cdots s_{\log^*_\varphi(n)}s_{\log^*_\varphi(n)+1}s^2_{\log^*_\varphi(n)+2} & \\
                            \ & + c_2+c_2s_2+\cdots+c_2s_2\ldots s_{\log_\varphi^*(n)} &\\
                    \leq    \ & c_1\sum_{i=1}^{\log^*_\varphi(n)}\left(s_{i+2}\prod_{j=2}^{i+2}s_j\right) + c_2 \sum_{i=2}^{\log^*_\varphi(n)}\prod_{j=2}^i s_j\\
                    \leq    \ & c_1\sum_{i=1}^{\log^*_\varphi(n)}s_2s_3^2s_{i+2}+c_2\log^*_\varphi(n)s_2s^2_3 & \text{(by Lemma~\ref{lem:log product})}\\
                    \leq    \ & c_1\log^*_\varphi(n)s_2s_3^3+c_2\log^*_\varphi(n)s_2s^3_3  &
    \end{align*}
    Hence the running time of a single iteration in $\expose(\ell,u)$ is $O(\log^*_\varphi(n)s_2(n)s_3^3(n))$. By Lemma~\ref{lem:iterated log}, $\log^*_\varphi(n)$ is $O(s_3(n))$ and thus $\Time(n,0)$ is $O(s_2(n)s_3^4(n))$, which by (\ref{eqn:slog}), is $O(s_2^2(n))$.

\medskip

    \item[(b)] This statement follows directly from (a) and the fact that the maximum number of iterations performed by the $\expose(\ell,u)$ operation is $s_1(n)$.


\medskip

\item[(c)] For the $\link(\ell,\ell')$ operation we use the following inductive hypothesis: \ Any calls to $\cut$ and $\expose$ on teams at layer-1 of the LTT $\ell$ takes time $cs_2(n)s_3^2(n)$ for some constant $c>0$.

    Let $T$ and $T'$ be the top layer trees of $\ell$ and $\ell'$ respectively and $d(\ell,\ell')$ be the height difference between $T$ and $T'$.
    Recall that the $\link(\ell,\ell')$ operation finds a node $u$ on the rightmost path of $T$ such that $T(u)$ and $T'$ have the same height and links $T(u)$ and $T'$ to a new node below this node. Hence the $\expose(\ell,v)$ operation in $\link(\ell,\ell')$ consists of $d(\ell,\ell')$ iterations.
    By (a), this call to $\expose(\ell,v)$ takes time $c_1\cdot d(\ell,\ell')\cdot s_2^2(n)$, where $c_1$ is a constant.

    The $\link(\ell,\ell')$ operation also makes a call to $\rotateleft(\ell,y)$ which consists of three calls to $\cut$ and one call to $\expose$ on teams at a lower layer. By the inductive hypothesis, these subroutine calls to takes time $c_2s_2(n)s_3^2(n)$ for some constant $c_2>c$. The $\link(\ell,\ell')$ operation also performs $O(d(\ell,\ell'))$ many other elementary operations. Therefore the running time of $\link(\ell,\ell')$ is at most
    \[
        c_1d(\ell,\ell')s_2^2(n)+c_2s_2(n)s_3^2(n)+c_3d(\ell,\ell').
    \]
    Note that we may pick $c$ to be bigger than $c_1+c_3$ and therefore the above expression is at most
    \[
        (c_1+c_3)d(\ell,\ell')s_2^2(n)+c_2s_2(n)s_3^2(n)
    \]
    which is at most $c\cdot d(\ell,\ell')\cdot s_2^2(n)$ when $n$ is sufficiently large.
    Therefore the running time for $\link(\ell,\ell')$ is $O(d(\ell,\ell')s_2^2(n))$.

\medskip

\item[(d)] Let $T$ be the top-layer tree of $\ell$. The cut operation first makes  a call to $\changeval(T,u,-\infty)$, which by (b) takes time $O\left(s_1(n)s^2_2(n)\right)$.
It then walks the path from $u$ to the root. Let $P=\{u_0,u_1,\ldots,u_m\}$ be the path in $T$ from $u_0=u$ to the root of $T$ where $u_{i+1}=p(u_i)$ for all $0\leq i<m$.  It is clear that $m\leq s_1(n)$ and thus the traversal itself takes time $O(s_1(n))$.

By Alg.~\ref{alg:multicut}, the $\cut(\ell,u)$ operation separates $T$ into a collection of trees
\[
    \widehat{T}_1, \widehat{T}_2,\ldots,\widehat{T}_k
\]
where each $\widehat{T}_i$ is either the left or the right subtree of $u_i$. As $T$ is balanced, one could easily prove by induction on $i$ that
\[
    h\left(\widehat{T}_i\right) \leq 2i-1.
\]
The $\cut(\ell,u)$ operation then iteratively joins the trees $\widehat{T}_1, \widehat{T}_2,\ldots,\widehat{T}_k$ to form two trees $T_1$ and $T_2$. Let $n_i$ be the number of leaves in the tree $\widehat{T}_i$. By (c) the total running time of the sequence of $\link$ operations performed is at most
\begin{align*}
     & \ 2\sum_{i\geq 1}^{m-1} \left( h\left(\widehat{T}_{i+1}\right) - h\left(\widehat{T}_i\right)\right)\cdot s^2_2\left(n_{i+1}\right) \\
\leq & \ 2  \sum_{i\geq 1}^{m-1} \left( h\left(\widehat{T}_{i+1}\right) - h\left(\widehat{T}_i\right)\right)\cdot s^2_2(n)\\
\leq & \ 2\left(h\left(\widehat{T}_m\right)-h\left(\widehat{T}_1\right)\right) \cdot s^2_2(n)\\
\leq & \ 2s_1(n)s^2_2(n).
\end{align*}
Therefore the overall running time of the $\cut(\ell,u)$ operation is $O\left(s_1(n)s^2_2(n)\right)$. \qed
\end{enumerate}
\end{proof}

\begin{theorem}
There is an algorithm that solves the dynamic partial sorting problem which performs the $\psort(\ell,k)$ operation in time $O(\log^*_\varphi(n)k\log k)$, and performs the $\link(\ell,\ell')$, $\cut(\ell,x)$ and $\changeval(\ell,x,x')$ operations in time $O\left(\log (n)\cdot\log^2(\log(n))\right)$, where $n$ is the size of the list $\ell$.
\end{theorem}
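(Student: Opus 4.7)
The plan is to derive the theorem essentially as a corollary of Lemma~\ref{lem:psort} and Lemma~\ref{lem:update}, combined with the size estimates in Lemma~\ref{lem:number of layers}. The $\psort(\ell,k)$ bound of $O(\log^*_\varphi(n) k \log k)$ is verbatim Lemma~\ref{lem:psort}, so the only real work is to convert the bounds in Lemma~\ref{lem:update}, which are expressed in the abstract parameters $s_1(n)$ and $s_2(n)$, into explicit logarithmic expressions.

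To handle the update operations, I would first record from Lemma~\ref{lem:number of layers} the estimates $s_1(n) \leq \log_\varphi(n)$ and $s_2(n) \leq \log_\varphi \log_\varphi(n)$, which give $s_1(n) = O(\log n)$ and $s_2^2(n) = O(\log^2 \log n)$. Substituting these into Lemma~\ref{lem:update}(b) and (d) immediately yields an $O(\log n \cdot \log^2 \log n)$ bound for both $\changeval(\ell,x,x')$ and $\cut(\ell,u)$. For $\link(\ell,\ell')$, Lemma~\ref{lem:update}(c) gives $O(d(\ell,\ell') \cdot s_2^2(n))$, where $d(\ell,\ell')$ is the height difference between the top-layer trees of $\ell$ and $\ell'$. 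I would bound this height difference by the height of the taller top-layer tree, which is $O(\log n)$ by Lemma~\ref{lem:fibonacci}, and thereby obtain the same $O(\log n \cdot \log^2 \log n)$ bound.

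All of the delicate work, namely the recursive telescoping analysis of $\expose$ in Lemma~\ref{lem:update}(a) and the per-layer recursion for $\nxt$ in Lemma~\ref{lem:psort}, has already been carried out, so the theorem itself presents no new obstacle. The proof reduces to the substitutions above together with the observation that $d(\ell,\ell') = O(\log n)$; the only thing to be careful about is to cite the correct part of Lemma~\ref{lem:update} for each update operation and to recall that the $s_i$ bounds of Lemma~\ref{lem:number of layers} apply uniformly to every layer-$i$ team involved in a recursive call, so that the running time estimates indeed compose as claimed.
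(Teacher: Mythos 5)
Your complexity analysis is correct and matches the paper's: the paper also obtains the update bounds by substituting $s_1(n) = O(\log n)$, $s_2(n) = O(\log\log n)$ from Lemma~\ref{lem:number of layers} into Lemma~\ref{lem:update}, and the $\psort$ bound is indeed Lemma~\ref{lem:psort} verbatim. Your observation that $d(\ell,\ell')$ is bounded by the height of the taller top-layer tree, hence $O(\log n)$, is the right way to close item (c).

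However, there is a genuine gap: you prove only the running-time bounds, while the theorem asserts that the algorithm \emph{solves} the dynamic partial sorting problem, which also requires a correctness argument. The paper's proof devotes a substantial paragraph to this. Correctness of $\psort$ is cited to Lemma~\ref{lem:next0-correct}, but this lemma is only valid if the LTT structure is well-formed, i.e.\ if the invariants (J1)--(J3) hold for every node. These invariants could in principle be destroyed by the update operations, so one must check explicitly that each update preserves them: for $\changeval$ this follows from Lemma~\ref{lem:changeval} (since the modified node is a leaf); for $\link$ one uses that the $\expose$ call on Line~\ref{line:link-expose} of Algorithm~\ref{alg:multilink} re-establishes the invariants, and that a subsequent $\rotateleft$ preserves them by Lemma~\ref{lem:multilink}; and for $\cut$ one appeals to the correctness of $\changeval$ and $\link$, which it invokes as subroutines. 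Without this invariant-preservation argument, one cannot conclude that $\psort$ returns correct answers after a sequence of updates, so the theorem is not yet established by your proposal. You should add this correctness discussion, at which point your proof would coincide with the paper's.
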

\begin{proof} The correctness of the $\psort(\ell,k)$ operation follows from Lemma~\ref{lem:next0-correct}. For correctness of the update operation, assume that (J1)--(J3) hold for every node in the LTT data structure.
 Suppose we perform the $\changeval(\ell,u,x')$ operation. Since $u$ is a leaf in $\ell$, by Lemma~\ref{lem:changeval}, (J1)--(J3) still hold for every node in the LTT. Suppose we perform the $\link(\ell,\ell')$ operation. The $\expose(\ell,v)$ operation in Line~\ref{line:link-expose} in Alg.~\ref{alg:multilink} preserves (J1)--(J3) for every node. If the operation performs $\rotateleft(\ell,y)$ in Line~\ref{line:link-rotate}, then by Lemma~\ref{lem:multilink} (J1)--(J3) also hold for every node and thus $\link(\ell,\ell')$ is correct. Lastly, suppose we perform the $\cut(\ell,u)$ operation. Then (J1)--(J3) still hold by the correctness of $\changeval$ and $\join$.

 The complexity of the $\psort(\ell,k)$ operations follows directly from Lemma~\ref{lem:psort}. The complexity of the update operations follows from Lemma~\ref{lem:update} and Lemma~\ref{lem:number of layers}.\qed
\end{proof}

\section{Conclusion and Future Work}\label{sec:conc}
This paper presents data structures for solving the dynamic partial sorting problem.
%
We propose here two possible directions of optimizing the layered tournament trees: on query size and on intervals. In both cases, we seek to perform optimizations by determining an optimal query size or interval, and then creating a data structure which performs this query optimally.
This is similar in principle to optimized BSTs as presented in \cite{Cormen02}.

We can perform these optimizations either statically or dynamically. In the case of optimizing for query size, in the static case, we have a table of queries and the probability that a query will have that length (similarly to the optimal BST). We then determine an expected query length, and make a structure to perform queries of that length optimally. In the dynamic case, the structure keeps track of query probabilities, and dynamically rebuilds itself when the expected query length changes. When optimizing for intervals, one would take a similar approach, except to optimize access to a particular interval or set of intervals that are frequently queried.

As the layered tournament tree structure is designed for very large data sets, other optimizations to consider for the structure are parallelism, external memory use optimization, and persistence (as described in \cite{Haim02}). In particular, the first two of these are suitable for extremely large data sets, and require different analysis of the structure, and likely a different implementation as well.

\end{document}